\documentclass[leqno]{article}

\usepackage{amsfonts,amsmath,amssymb,amsthm,graphicx,psfrag,subfigure,enumerate}
\usepackage[round]{natbib}

\usepackage[dvips]{hyperref}
\usepackage{color}


\psfrag{zero}{\tiny $0$}
\psfrag{one}{\tiny $1$}
\psfrag{two}{\tiny $2$}
\psfrag{three}{\tiny $3$}
\psfrag{four}{\tiny $4$}
\psfrag{Ak}{\footnotesize $A(K)$}
\psfrag{Ek}{\footnotesize $E(K)$}
\psfrag{Ertk}{\footnotesize $E(e^{rT}K)$}
\psfrag{KS0}{\footnotesize $K-S_0$}
\psfrag{ErtkS0}{\footnotesize $e^{rT}K-S_0$}
\psfrag{Akt}{$\tilde{A}(K)$}

\psfrag{0}{\tiny $0$}
\psfrag{1}{\tiny $1$}
\psfrag{2}{\tiny $2$}
\psfrag{3}{\tiny $3$}
\psfrag{4}{\tiny $4$}
\psfrag{5}{\tiny $5$}
\psfrag{6}{\tiny $6$}
\psfrag{7}{\tiny $7$}
\psfrag{0.2}{\tiny $0.2$}
\psfrag{0.4}{\tiny $0.4$}
\psfrag{0.6}{\tiny $0.6$}
\psfrag{0.8}{\tiny $0.8$}
\psfrag{Ak}{\scriptsize $A(K)$}
\psfrag{Ek}{\scriptsize $E(K)$}
\psfrag{Ertk}{\scriptsize $E(e^{rT}K)$}
\psfrag{Erttk}{\scriptsize $E(e^{r(T-t^{*})}K)$}
\psfrag{KS0}{\scriptsize $K-S_0$}
\psfrag{AktEk}{\scriptsize $A(K) = E(K)$}
\psfrag{ErtkS0}{\scriptsize $e^{-rT}K-S_0$}
\psfrag{Akt}{\scriptsize $\tilde{A}(K)$}
\psfrag{S_t}{\scriptsize $S_t$}
\psfrag{Ks}{}

\newtheorem{thm}{Theorem}[section]
\newtheorem{lemma}[thm]{Lemma}

\newtheorem{corollary}[thm]{Corollary}
\newtheorem{prop}[thm]{Proposition}

\newtheorem{remark}[thm]{Remark}
\newtheorem*{rmk}{Remark}

\setcounter{tocdepth}{2}




\title{Model-independent no-arbitrage conditions on American put options}
\author{A. M. G. Cox\footnote{email: a.m.g.cox@bath.ac.uk, web: http://www.maths.bath.ac.uk/$\sim$mapamgc/} \, and Christoph Hoeggerl\footnote{email: ch487@bath.ac.uk}\\ Department of Mathematical Sciences,\\ University of Bath, Bath, U.K.}

\begin{document}
\maketitle

\begin{abstract}
\noindent We consider the pricing of American put options in a model-independent setting: that is, we do not assume that asset prices behave according to a given model, but aim to draw conclusions that hold in any model. We incorporate market information by supposing that the prices of European options are known. 

In this setting, we are able to provide conditions on the American Put prices which are necessary for the absence of arbitrage. Moreover, if we further assume that there are finitely many European and American options traded, then we are able to show that these conditions are also sufficient. To show sufficiency, we construct a model under which both American and European options are correctly priced at all strikes simultaneously. In particular, we need to carefully consider the optimal stopping strategy in the construction of our process.\\
\end{abstract}

\noindent {\bf Keywords:} Model-independent arbitrage, American option, Convex conjugate,\\ \hspace*{19 mm} Legendre-Fenchel transformation, Skorokhod embedding

\newpage

\section{Introduction}

The standard approach to pricing contingent claims is to postulate a model and to determine the prices as the discounted expected payoffs under some equivalent risk-neutral measure.  A major problem with this approach is that no model can capture the real world behaviour of asset prices fully and this leaves us prone to model risk. An alternative to the model-based approach is to try to ask: {\it when are observed prices consistent with some model?} It can often then be shown that when there is no model which is consistent with observed prices, then there exists an arbitrage which works under all models. Since these properties hold independently of any model, we shall refer to such notions as being model-independent.

The basis of the model-independent approach, which we follow and 
which can be traced back to the insights of \citet{BLZ}, is to suppose European call options are sufficiently liquidly traded that they are no longer considered as being priced under a model, but are obtained exogenously from the market. According to \citet{BLZ} call prices for a fixed maturity date $T$ can then be used to recover the marginal distribution of the underlying at time $T$. This way contingent claims depending only on the distribution at the fixed time $T$ can be priced without having made any assumptions on the underlying model. \citet{HOB} first observed that, by considering the possible martingales which are consistent with the inferred law, one can often infer extremal properties of the class of possible price processes, and then use this to deduce bounds on the prices of other options on the same underlying when using the European option prices as hedging instrument. This approach has been extended in recent years to pricing various path-dependent options using Skorokhod embedding techniques. \citet{HOB}, for example, determined how to hedge lookback options. \citet{BHR} showed how to hedge barrier options. \citet{DAH} determined the range of traded option prices for European calls, whereas \citet{CO2,CO1} found robust prices on double touch and no-touch barrier options, and \citet{CAW} have extended results of \citet{DUP3} and \citet{CAL} regarding options on variance. We refer to \citet{HO1} for an overview of this literature. Recently, \citet{GLT} applied the Kantorovich duality to transform the problem of superhedging under volatility uncertainty to an optimal transportation problem, where they managed to recover the results from \citet{HOB} for lookback options.

In this paper, we will be interested in the prices of American put options, and in particular, whether a given set of American Put prices and co-terminal European Put prices are consistent with the absence of model-independent arbitrage. Our only financial assumptions are that we can buy and sell both types of derivatives initially at the given prices, and that we can trade in the underlying frictionlessly  at a discrete number of times. Under these conditions, we are able to give a set of simple conditions on the prices which, if violated, guarantee the existence of an arbitrage under {\it any} model for the asset prices. In addition, we show that these conditions are sufficient in the restricted setting where only finitely many European and American options trade. Specifically, given prices which satisfy our conditions, we are able to produce a model and a pricing measure that reproduce these prices. Clearly, the restriction to a finite number of traded options is not a significant restriction for practical purposes.

Several authors have considered arbitrage conditions on American options in the model-independent setting. Closely related to our work is the work of \citet{EKH}, who determine a time-homogeneous stock price process consistent with given perpetual option prices, and the subsequent generalisation to a wider class of optimal stopping problems by \citet{HAK}, however both these papers work under the assumption that the price process lies in the class of time-homogenous diffusions, an assumption that we do not make. Also of relevance is a working paper of \citet{NEU}, who found arbitrage bounds for a single American option with a finite horizon through a linear programming approach. Neuberger takes as given the prices of European options at all maturities, rather than a single maturity as we do, and is able to relate the range of arbitrage-free prices to solutions of a linear programming problem. Although we only consider prices with a single common maturity date, the conclusions we provide are more concrete. Finally, \citet{SHA} has obtained an upper and lower bound on an American put option with fixed strike from given American put options with the same maturity, but different strikes. He does not consider the impact of co-terminal European options, and his resulting conditions are therefore easily shown to be satisfied by some model in a one-step procedure.


The main results in this paper therefore concern necessary and sufficient conditions for the absence of arbitrage in quoted co-terminal European and American options: specifically, we are able to give four conditions which we show to be necessary and sufficient. It is well known (e.g. \citet{DAH} or  \citet{CAM}) which conditions must be placed on European Put options for the absence of model-independent arbitrage, so we are interested only in conditions on the American options in terms of the European prices. Three of the conditions are not too surprising: there are known upper and lower bounds, and the American prices must be increasing and convex. However we also establish a fourth condition in terms of the value and the gradient of the European and American options, which we have not found elsewhere in the literature. This condition also has a natural representation in terms of the Legendre-Fenchel transform.

To establish that our conditions are necessary for the absence of model-independent arbitrage, we show that there exists a simple strategy that creates an arbitrage should any of the conditions be violated. It turns out to be much harder to show that our conditions are sufficient: to do this, it is necessary for us to specialise to the case where there are only finitely many traded options, and in this setting, we are able to construct a model under which all options are correctly priced. This requires us to construct both a price process, and keep track of the value function of an optimal stopping problem. The description of this process will comprise a large amount of the content of this paper. While this approach is in spirit close to many of the papers which exploit Skorokhod embedding technologies (e.g.~\citet{CO2,CO1,CAW,HOB}), 
there are also a number of differences: specifically, that we do not use a time-changed Brownian motion, nor do we attempt to construct an `extremal' embedding; rather, the embedding step will form a fairly small part of the description of our overall construction.

The construction of the process which attains a given set of prices is described by means of an algorithm: from a set of possible American and European Put prices, we shall describe how the prices may be `split' into two new pairs of functions, which can then be considered as independent sets of European and American prices at a later time. By repeated splitting, we are able to show that the problem eventually reduces to a trivial model which we can describe easily. From this recursive procedure, we are able to reconstruct a process which satisfies all our required conditions. It will turn out that the price process we recover is fairly simple: the price will grow at the interest rate until a non-random time, at which the price jumps to one of two fixed levels. This splitting continues until the maturity date, when it jumps to a final position.

The conditions that we derive should be of interest both for theoretical and practical purposes. They are important for market makers and speculators alike, as a violation of the conditions represents a clear misspecification in the prices under any model, allowing for arbitrage which can be realised using a simple semi-static trading strategy. Our conditions also present simple consistency checks that can be applied to verify that the output of any numerical procedure is valid, and to extrapolate prices which are not quoted from existing market data. In addition, the results we present can also be used as a mechanism to provide an estimate of model-risk associated with a particular position in a set of American options.

The rest of the paper is organised as follows. In Section~\ref{snc} we discuss the necessary conditions and show that a violation of any of these conditions leads to model-independent arbitrage. In Section~\ref{ssc} we will then argue that for any given set of prices $A$ and $E$ that satisfy the necessary conditions there exists a model and a plausible price process, hence the conditions also have to be sufficient for the absence of model-independent arbitrage. The Appendix contains some additional proofs that would have only impaired the reading fluency of the paper.

\section{Necessary conditions on the American Put price function A}\label{snc} 

Assume we are given an underlying asset $S$ which does not pay dividends and which may be traded frictionlessly. In addition, we may hold cash which accrues interest at a constant rate $r > 0$. Furthermore, we will be able to trade options on the underlying at given prices at time 0 only, and these options will always have a common maturity date $T$.

As we are interested in model-independent behaviour we do not begin by specifying a model or probability measure. It is therefore not immediately clear what arbitrage or the absence of arbitrage means. Along the lines of \citet{DAH}, we say that there exists model-independent arbitrage if we can construct a semi-static portfolio in the underlying and the options that has strictly negative initial value and only non-negative subsequent cashflows. Further we consider a portfolio to be semi-static if it involves holding a position in the options and the underlying, where the position in the options was fixed at the initial time and the position in the underlying can only be altered finitely many times by a self-financing strategy. 

There are situations where no model-independent arbitrage opportunities exist, but where we still can find a semi-static portfolio such that the initial portfolio value is non-positive, all subsequent cashflows are non-negative and the probability of a positive cashflow is non-zero, if only the null sets of the underlying model are known. These trading strategies were termed weak arbitrage in  \citet{DAH}.

We will consider two cases, one where we are given European put option prices at a finite number of strikes and one where we are given a European price function $E$ for all strikes $K\geq 0$. When there are only finitely many option prices given we shall assume that the European Call prices satisfy the conditions given in Theorem 3.1 of \citet{DAH} --- that is, that there is neither a model-independent, nor a weak arbitrage --- and that we have Put-Call parity, then we can immediately derive conditions on the European put price function $E$ that have to be satisfied for the absence of model-independent arbitrage:

\begin{lemma}\label{coe} 
Suppose the prices of European Put options with maturity $T$ are given for a finite number of strikes $K_1,...,K_n$. Denote the European Put option prices as a function of the strike $K$ by $E$, where $E$ is interpolated linearly between the given prices. Then the European Put prices are free of model-independent and weak arbitrage opportunities if and only if the following conditions are satisfied:
\begin{enumerate}
\item The European Put price function $E$ is increasing and convex in K with $E(0) = 0$.
\item The function $(e^{-rT}K-S_0)_{+}$ is a lower bound for $E$.
\item The function $e^{-rT}K$ is an upper bound for $E$.
\item For any $K \geq 0$ with $E(K) > e^{-rT}K-S_0$ we have $E'(K+) < e^{-rT}$.
\end{enumerate} 
Here $S_0$ is the current price of the underlying asset.
\end{lemma}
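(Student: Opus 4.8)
The plan is to derive the four conditions on $E$ directly from the corresponding conditions on European Call prices via Put–Call parity, invoking Theorem 3.1 of \citet{DAH} as a black box for the Call side. Write $C(K)$ for the (linearly interpolated) Call price function; Put–Call parity at maturity $T$ with deterministic interest rate $r$ reads $C(K) - E(K) = S_0 - e^{-rT}K$, so that $E(K) = C(K) - S_0 + e^{-rT}K$. The forward direction (absence of arbitrage $\Rightarrow$ the four conditions) and the reverse direction (the four conditions $\Rightarrow$ absence of arbitrage) will then both follow by translating each of the Davis–Hobson–Raval conditions on $C$ through this affine relation; since parity is itself enforced by a static arbitrage, a portfolio of puts can always be converted to an equivalent portfolio of calls and cash, so arbitrage-freeness is genuinely equivalent on the two sides.

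First I would spell out the Call-side conditions from \citet{DAH}: $C$ is nonnegative, decreasing, convex, with $C(0) = S_0$ (or $\le S_0$), $C(K) \ge (S_0 - e^{-rT}K)_+$, $C(K) \to 0$ appropriately, and the slope condition that wherever $C(K) > (S_0 - e^{-rT}K)_+$ one has $C'(K+) > -e^{-rT}$ (equivalently $C'(K+) > -e^{-rT}$, preventing a bull-spread arbitrage). Then I would substitute $C(K) = E(K) + S_0 - e^{-rT}K$ into each. Convexity is preserved since we add an affine function, giving condition (1)'s convexity; $C$ decreasing becomes $E'(K) \le e^{-rT}$, and combined with convexity and $E(0)=0$ this yields $E$ increasing, i.e. condition (1). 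The bound $C(K) \le S_0$ becomes $E(K) \le e^{-rT}K$, condition (3). The bound $C(K) \ge (S_0 - e^{-rT}K)_+ = S_0 - e^{-rT}K + (e^{-rT}K - S_0)_+$ rearranges to $E(K) \ge (e^{-rT}K - S_0)_+$, condition (2). Finally the Call slope condition $C'(K+) > -e^{-rT}$ on the set $\{C > (S_0 - e^{-rT}K)_+\}$ becomes $E'(K+) < e^{-rT}$ on the set $\{E(K) > (e^{-rT}K - S_0)_+\}$, which is condition (4). I would also note $E(0) = C(0) - S_0 = 0$.

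The main obstacle I anticipate is not any single computation but making the arbitrage-equivalence rigorous: one must check that a weak or model-independent arbitrage in the Put prices corresponds exactly to one in the Call prices under parity, handling the fact that Put–Call parity must itself be assumed (as the statement does) since otherwise parity violation is its own arbitrage. I would argue that any semi-static portfolio in puts, the underlying, and cash can be rewritten, using the parity identity as a costless substitution, as a semi-static portfolio in calls, the underlying, and cash with the same initial cost and the same terminal payoff in every state; hence arbitrage (weak or strong) transfers in both directions, and the equivalence in \citet{DAH} for Calls yields the equivalence for Puts. A minor additional point is the behaviour of the linear interpolation at the endpoints and for $K$ beyond the largest quoted strike $K_n$: I would adopt the same extrapolation convention as \citet{DAH} (e.g. extending $C$ so that it eventually hits its lower bound, equivalently extending $E$ along the line $e^{-rT}K - S_0$), and verify the four conditions are consistent with that convention. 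Everything else is the routine affine bookkeeping sketched above.
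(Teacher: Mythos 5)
The paper does not spell out a proof of this lemma; it simply asserts that the conditions follow ``immediately'' from Theorem~3.1 of \citet{DAH} on Call prices together with Put--Call parity, which is exactly the translation you carry out. Your proposal matches the paper's intended argument, and indeed makes explicit the point the paper leaves implicit --- that semi-static arbitrage transfers between the Put and Call sides under the parity substitution.
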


In the situation where European put prices are given for all positive strikes we can replace the fourth condition of Lemma~\ref{coe} by $|E(K)-(e^{-rT}K-S_0)| \rightarrow 0$ as $K \rightarrow \infty$ under the assumption that there is no {\it weak free lunch without vanishing risk} (for details see \citet{CO1}).

Returning to the situation where there are finitely many strikes given we can conclude due to \citet{BLZ} that these conditions are sufficient to imply the existence of a probability measure $\mu$ on $\mathbb{R}^{+}$ such that $E(K) = \int (e^{-rT}K- x)_{+} \mu(dx)$. In addition the following result has to hold.

\begin{lemma}\label{blr} 
If there exists a probability measure $\mu$ on $\mathbb{R}^{+}$ such that $\int x \mu(dx) = S_0$ and $E(K) = \int (e^{-rT}K-x)_{+} \mu(dx)$, then the European put price function $E$ satisfies the conditions of Lemma~\ref{coe}.
\end{lemma}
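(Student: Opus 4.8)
The plan is to verify the four conditions of Lemma~\ref{coe} one at a time, working directly from the representation $E(K) = \int (e^{-rT}K - x)_+\,\mu(dx)$ together with the barycentre identity $\int x\,\mu(dx) = S_0$; no arbitrage argument is needed, since this is purely a statement about the integral.

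First, for Condition~1 I would observe that $E(0) = \int (-x)_+\,\mu(dx) = 0$ because $\mu$ is carried by $\mathbb{R}^+$, and that for each fixed $x \ge 0$ the map $K \mapsto (e^{-rT}K - x)_+$ is nondecreasing and convex on $[0,\infty)$, these properties being preserved under integration against $\mu$. For Condition~3, the elementary pointwise bound $0 \le (e^{-rT}K - x)_+ \le e^{-rT}K$ (valid since $x \ge 0$ and $K \ge 0$) integrates to $E(K) \le e^{-rT}K$. For Condition~2, since $t \mapsto t_+$ is convex, Jensen's inequality gives $E(K) = \int (e^{-rT}K - x)_+\,\mu(dx) \ge \bigl(\int (e^{-rT}K - x)\,\mu(dx)\bigr)_+ = (e^{-rT}K - S_0)_+$.

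Condition~4 is the only one requiring a little care. I would first compute $E'(K+)$ by differentiating under the integral sign: for $h>0$ the difference quotient $h^{-1}\bigl[(e^{-rT}(K+h)-x)_+ - (e^{-rT}K-x)_+\bigr]$ is bounded by $e^{-rT}$ and converges, as $h \downarrow 0$, to $e^{-rT}\mathbf{1}_{\{x \le e^{-rT}K\}}$ for every $x$, so dominated convergence yields $E'(K+) = e^{-rT}\mu([0,e^{-rT}K])$. Hence $E'(K+) < e^{-rT}$ if and only if $\mu((e^{-rT}K,\infty)) > 0$. To conclude I would argue by contraposition: if $\mu((e^{-rT}K,\infty)) = 0$ then $e^{-rT}K - x \ge 0$ for $\mu$-a.e.\ $x$, so $(e^{-rT}K-x)_+ = e^{-rT}K - x$ $\mu$-a.e.\ and therefore $E(K) = \int(e^{-rT}K - x)\,\mu(dx) = e^{-rT}K - S_0$; thus whenever $E(K) > e^{-rT}K - S_0$ we must have $\mu((e^{-rT}K,\infty)) > 0$, which is exactly Condition~4.

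The computation of $E'(K+)$ — in particular the treatment of the boundary value $x = e^{-rT}K$ inside the indicator — is the only mildly delicate point; everything else is a routine consequence of monotonicity, convexity and Jensen's inequality.
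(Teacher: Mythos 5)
Your proposal is correct and follows essentially the same route as the paper's proof: Jensen's inequality for the lower bound, the pointwise bound $(e^{-rT}K-x)_+ \le e^{-rT}K$ for the upper bound, preservation of monotonicity/convexity under integration for the first condition, and the identity $E'(K+) = e^{-rT}\mu([0,e^{-rT}K])$ together with a contrapositive argument for the fourth. Your treatment is somewhat more explicit about the dominated-convergence step and the boundary behaviour of the indicator, but the argument is the same in substance.
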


\begin{proof}
The first condition follows from the fact that $\mu$ is a probability measure and that the integrand $(e^{-rT}K-x)_{+}$ of $E$ is positive and convex. The lower bound is obtained by applying Jensen's inequality to the convex function $x \mapsto (e^{-rT}K-x)_{+}$, whereas the upper bound follows from $(e^{-rT}K-x)_{+} \leq e^{-rT}K$ as $\mu$ is only defined on $\mathbb{R}^{+}$.

In the case of the fourth condition we will prove the contrapositive. Note that $E'(K) = e^{-rT}\int \mathbf{1}_{[0,e^{-rT}K]}(x) \mu(dx)$. Since $\mu$ is a probability measure and we assume that there exists a $K^{*}$ with  $E'(K^{*}) \geq e^{-rT}$ we can conclude that $\mu([0,e^{-rT}K^{*}]) = 1$, hence for any $K \geq K^{*}$ we must have $$E(K) = \int(e^{-rT}K-x)\mu(dx) = e^{-rT}K - S_0,$$
which completes the proof.
\end{proof}

Note further that if we are given the finite set of prices $E(K_1),...,E(K_n)$ such that $E(K_i) > e^{-rT}K_i - S_0$ for all $i=1,...,n$, then we can always extend the set of strikes by a final strike $K_{n+1} $ for which we set $E(K_{n+1}) = e^{-rT}K_{n+1} - S_0$. The new set of prices then satisfies again the conditions given in Lemma~\ref{coe} as long as $K_{n+1}$ was chosen such that $$K_{n+1} > \frac{(E(K_n)+S_0)K_{n-1}-(E(K_{n-1})+S_0)K_n}{E(K_n)-E(K_{n-1})-e^{-rT}K_n+e^{-rT}K_{n-1}},$$ where the term on the right hand-side is the strike where the linear piece $\frac{E(K_n)-E(K_{n-1})}{K_n-K_{n-1}}(K-K_{n-1}) + E(K_{n-1})$ intersects with $e^{-rT}K-S_0$. We can therefore assume that we are always given a set of prices where the last one lies on the lower bound $e^{-rT}K-S_0$. 


Under these assumptions we are able to state the main result of this section, Theorem~\ref{nco}, which will give us conditions on $A$ that necessarily have to be fulfilled for $A$ to be an arbitrage-free American Put price function, assuming we are given the prices of European Put options satisfying the conditions above.

\begin{thm}\label{nco}
If $A$ is an arbitrage-free American Put price function then it must satisfy the following conditions:
\begin{enumerate}[(i)]
\item The American Put price function $A$ is increasing and convex in $K$.
\item For any $K \geq 0$ we have
\begin{align*}
A'(K+)K-A(K) \geq E'(K+)K-E(K).
\end{align*}
\item The function $\max\{E(K),K-S_0\}$ is a lower bound for $A$.
\item The function $E(e^{rT}K)$ is an upper bound for $A$.
\end{enumerate}
\end{thm}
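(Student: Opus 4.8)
The plan is to establish each of the four conditions in contrapositive form: assuming a condition fails, I exhibit a semi-static portfolio --- a fixed position in the traded puts together with a finitely-rebalanced position in the underlying and cash --- with strictly negative initial value and only non-negative cashflows thereafter. Since the European prices are themselves assumed arbitrage-free, any such portfolio is a model-independent arbitrage, contradicting the hypothesis. The one recurring device is that whenever an American put I have sold is exercised against me at a time $\tau$, I immediately exercise (part of) an American put I hold; because $K\mapsto(K-s)_+$ is increasing and convex, this lets the long leg dominate the short leg state-by-state.

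\textit{Conditions (i) and (iii).} If $A$ is not increasing, pick $K_1<K_2$ with $A(K_1)>A(K_2)$ and go long a put of strike $K_2$, short a put of strike $K_1$; if $A$ is not convex, pick $K_1<K_2<K_3$ with $A(K_2)>\lambda A(K_1)+(1-\lambda)A(K_3)$, $K_2=\lambda K_1+(1-\lambda)K_3$, and go long $\lambda$ puts of strike $K_1$, long $(1-\lambda)$ puts of strike $K_3$, short a put of strike $K_2$. Exercising the long leg exactly when the short leg is exercised covers the liability by monotonicity (resp.\ convexity) of $K\mapsto(K-s)_+$, and the initial credit is the arbitrage. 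For the lower bound in (iii): if $A(K)<K-S_0$ then $K>S_0$, and buying the American put and exercising it at once (taking a unit long position in the share to settle it, if needed) locks in $K-S_0-A(K)>0$; if $A(K)<E(K)$, buy the American put of strike $K$, sell the European put of strike $K$, and exercise the American as a European at $T$.

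\textit{Condition (iv).} Suppose $A(K)>E(e^{rT}K)$. Sell the American put of strike $K$ and buy the European put of strike $e^{rT}K$ --- a convex combination of the two traded European puts bracketing $e^{rT}K$, which super-replicates $(e^{rT}K-S_T)_+$ by convexity of the payoff --- and bank the positive difference. When the American is exercised at $\tau$, settle it by taking a unit long position in the share, financing the strike payment $K$ by borrowing; holding the share and the European put to maturity leaves terminal wealth at least $S_T+(e^{rT}K-S_T)_+-Ke^{r(T-\tau)}\ge e^{rT}K-Ke^{r(T-\tau)}=Ke^{r(T-\tau)}(e^{r\tau}-1)\ge 0$, using $r\ge 0$; and if the put is never exercised only the (non-negative) European put and the grown credit remain. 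This step uses only that $E$ is the put price function of a law on $\mathbb{R}^+$, as guaranteed by Lemma~\ref{blr} and the discussion preceding the theorem.

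\textit{Condition (ii), the main obstacle.} First I rewrite (ii): dividing by $K^2$, the inequality $KA'(K+)-A(K)\ge KE'(K+)-E(K)$ says exactly that $K\mapsto(A(K)-E(K))/K$ is non-decreasing. Hence if (ii) fails there are (in the finite-strike case, two adjacent traded strikes) $K_1<K_2$ with $(A(K_1)-E(K_1))/K_1>(A(K_2)-E(K_2))/K_2$. Consider the portfolio: short $\tfrac1{K_1}$ American and long $\tfrac1{K_1}$ European puts of strike $K_1$, long $\tfrac1{K_2}$ American and short $\tfrac1{K_2}$ European puts of strike $K_2$; its initial cost is $-(A(K_1)-E(K_1))/K_1+(A(K_2)-E(K_2))/K_2<0$. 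If the short American put of strike $K_1$ is never exercised early, then at $T$ the American and European legs cancel identically. If it is exercised at some $\tau$, necessarily with $S_\tau<K_1$, I simultaneously exercise the long American put of strike $K_2$; the two strike payments cancel (both equal $1$) and I am left holding $\tfrac1{K_1}-\tfrac1{K_2}=\tfrac{K_2-K_1}{K_1K_2}$ shares --- delivered to me net of what I deliver, or bought with the residual cash $\tfrac{K_2-K_1}{K_1K_2}S_\tau$ --- alongside the residual European position $+\tfrac1{K_1}$ put $K_1$, $-\tfrac1{K_2}$ put $K_2$. The value of this residual position at $T$ is $\tfrac{K_2-K_1}{K_1K_2}S_T+\tfrac1{K_1}(K_1-S_T)_+-\tfrac1{K_2}(K_2-S_T)_+$, which a short check shows equals $0$ on $[0,K_1]$ and is positive beyond, hence $\ge 0$ for every $S_T$. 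Either way terminal wealth dominates the grown initial credit, so this is an arbitrage. The substance of the theorem is concentrated here: one has to spot the scaling that makes $(A-E)/K$ the right object and --- crucially --- see that the residual mismatch between the two European legs is exactly the value of the share position that early exercise of the longer-strike American against the shorter-strike one forces onto the hedge; without trading the underlying this last step fails. A secondary nuisance is arguing that a pointwise failure of the differential inequality can be realised between two genuinely traded strikes.
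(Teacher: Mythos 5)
Your proof is correct and takes essentially the same route as the paper's. In particular, for condition (ii) the paper scales the legs by $\tfrac{1}{\epsilon}$ and $\tfrac{K+\epsilon}{K\epsilon}$ rather than your $\tfrac{1}{K_2}$ and $\tfrac{1}{K_1}$, but the two portfolios are the same up to an overall scalar, and your reformulation of (ii) as monotonicity of $(A(K)-E(K))/K$ is exactly the content of Lemma~\ref{cae} in the appendix.
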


With the exception of {\it (ii)}, these properties are not too surprising: it is well known that the American Put price must be convex and increasing, and it is also clear that the price of the American option must dominate both the corresponding European option, and its immediate exercise value. The upper bound given in {\it (iv)} appears to date back to \citet{MAR}. Although he works in the Black-Scholes setting, his arguments hold also in the general case under consideration here. 

\begin{remark} 
\begin{enumerate}[(i)]
\item
Recall that the Legendre-Fenchel transform of a function $f:\mathbb{R} \rightarrow \mathbb{R}$ is given by $f^{*}(k)=\sup_{x \in \mathbb{R}}\{kx - f(x)\}$, so we can rewrite the second condition of Theorem~\ref{nco} as 
\begin{align} 
A^{*}(A'(K+)) \geq E^{*}(E'(K+)) 
\end{align} 
for all $K \geq 0$. This can be seen by rewriting $f^{*}(k) = -\inf_{x \in \mathbb{R}} \{ f(x) - kx\}$ and noting that the function $f$ is given for $x \geq 0$, and is non-negative, increasing and convex in our case. 

\item
It follows directly from condition $(ii)$ of Theorem \ref{nco} that the early exercise premium $A-E$ has to be increasing, as $A'(K) - E'(K) \geq \frac{A(K)-E(K)}{K}$ is positive. However, these statements are not equivalent, and there exist examples where the early-exercise premium is increasing, and the other necessary conditions are satisfied, but condition {\it (ii)} of the theorem fails.
\end{enumerate}
\end{remark}

\begin{proof}[Proof of Theorem~\ref{nco}]
We will prove each statement separately using model-independent arbitrage arguments.
To see that the American Put price function $A$ has to be increasing in the strike $K$ we will assume the contrary so that we have $A(K_1) > A(K_2)$ for  any two positive strikes $K_1 < K_2$. 
We can then make an initial profit of $A(K_1) - A(K_2)$ by short selling an American put option with strike $K_1$ and buying an American put option with strike $K_2$. To guarantee that any subsequent cashflow is positive we only have to close out the long position when the American with strike $K_1$ is exercised, leaving us with $K_2 - K_1 > 0$. We can then conclude that the function $A(K)$ has to be increasing in $K$, since there would be an arbitrage opportunity otherwise.

As in the case before we will prove that the function $A$ has to be convex by assuming that $\alpha A(K_1) + (1-\alpha)A(K_2) < A(\alpha K_1 + (1-\alpha)K_2)$ for some $\alpha \in [0,1]$ and $K_1 < K_2$ holds. This way a portfolio consisting of a short position in an American put option with strike $\alpha K_1 + (1-\alpha)K_2$ and a long position of $\alpha$ units in an American put option with strike $K_1$ and $(1-\alpha)$ units in an American put option with strike $K_2$ has strictly negative initial cost. If we close out the long positions when the counterparty in the short contract exercises we have at the time of exercise, denoted $\tau^{*}$, at least
\begin{align*}
\alpha(K_1 - S_{\tau^*}) + (1-\alpha)(K_2 - S_{\tau^*}) + (S_{\tau^*} - (\alpha K_1 + (1-\alpha) K_2)) = 0.
\end{align*}
Therefore absence of arbitrage implies that $A(K)$ has to be convex in $K$.

As proved in Lemma \ref{cae} we have that the condition in $(ii)$ is equivalent to
\begin{align}\label{lca} 
\frac{1}{\epsilon} (A(K+\epsilon)-A(K)) - \frac{1}{K}A(K) \geq \frac{1}{\epsilon}(E(K+\epsilon)-E(K)) - \frac{1}{K} E(K)
\end{align}
for all $K \geq 0$ and any $\epsilon$ with $0 < \epsilon \leq \tilde{\epsilon}(K)$. Suppose the condition in (\ref{lca}) is violated, then we can make an initial profit by selling $\frac{1}{\epsilon}$ units of $E(K+\epsilon)$ and $\frac{K+\epsilon}{K\epsilon}$ units of $A(K)$, while buying $\frac{1}{\epsilon}$ units of $A(K+\epsilon)$ and $\frac{K+\epsilon}{K\epsilon}$ units of $E(K)$.

Suppose now that the shorted American was exercised at time $\tau$, where we then also exercised the long American to obtain at maturity $T$ a cashflow of
\begin{multline*}
\frac{1}{\epsilon}\left[(e^{r(T-\tau)}(K+\epsilon)-S_T)-(K+\epsilon - S_T)_{+}\right] \\
- \frac{K+\epsilon}{K\epsilon}\left[(e^{r(T-\tau)}K-S_T) -(K-S_T)_{+} \right],
\end{multline*}
which is equal to
\begin{align*}
\begin{cases}
\frac{1}{K}S_T &,S_T \geq K+\epsilon\\
\frac{K+\epsilon}{K\epsilon}(S_T - K) &,S_T \in [K, K+\epsilon]\\
0 &,S_T \leq K,
\end{cases}
\end{align*}
implying arbitrage.

To obtain the upper bound we suppose $E(e^{rT}K) < A(K)$. We sell the American option with strike $K$, and buy the European with strike $e^{rT}K$, making an initial profit of $A(K) - E(e^{rT}K)$. At maturity the American generates a cashflow $S_T - Ke^{r(T-\tau)}$, depending on the time $\tau$ when the American option was exercised. Further we receive from the European option the amount $(e^{rT}K - S_T)_{+}$. In the case where $K > e^{-rT}S_T$ we have
\begin{align*}
(e^{rT}K - S_T) + (S_T - Ke^{r(T- \tau)}) = e^{rT}K(1-e^{-r\tau}) > 0.
\end{align*}
Whereas for $K < e^{-rT}S_T$ the European Put $E(e^{rT}K)$ has $0$ payoff, but by the assumption on $K$ the American Put now gives us
\begin{align*}
S_T - Ke^{r(T- \tau)} > 0.
\end{align*}
Analogously we can show that the lower bound has to hold and we have therefore proved all the statements of the theorem.
\end{proof}

We observe that the arbitrage strategies in the proof above involve exercising long positions in an American at a time determined by the holder of a different option. Initially, this might seem worrying, as we may be exercising suboptimally. However, we note that in the arbitrages constructed above, whenever we exercise early we also hold a short position in an American with smaller strike, which is exercised at the same time. In the following lemma we show that if this option is exercised optimally, then it is optimal for us also to exercise. We note that a similar result has been used in a similar context by \citet{SHA}.

\begin{lemma}\label{oeas} 
Suppose prices are consistent with some model. If for a positive strike $K_1$ it is optimal to exercise the American put option immediately, then for any strike $K_2 > K_1$ it is also optimal to exercise the American put option with strike $K_2$ immediately.
\end{lemma}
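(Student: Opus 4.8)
The plan is to argue inside a fixed model for which the quoted prices are the no-arbitrage prices, and to exploit the optimal-stopping representation of the American put value. Let $\mathbb{Q}$ denote a corresponding risk-neutral measure, so that for every strike $K \ge 0$
\begin{align*}
A(K) = \sup_{\tau} \mathbb{E}_{\mathbb{Q}}\!\left[ e^{-r\tau}(K - S_\tau)_{+} \right],
\end{align*}
the supremum running over the admissible stopping times of the model; the hypothesis ``it is optimal to exercise immediately for the strike $K_1$'' then means precisely that this supremum is attained at $\tau \equiv 0$, i.e. $A(K_1) = (K_1 - S_0)_{+}$. Since exercising an option that is not in the money carries no payoff, there is nothing to prove unless $K_1 > S_0$, which we henceforth assume; then also $K_2 > K_1 > S_0$ and $A(K_1) = K_1 - S_0$.

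The argument has two elementary ingredients. First, for any $x \ge 0$ and $0 \le K_1 < K_2$ one has
\begin{align*}
(K_2 - x)_{+} \le (K_1 - x)_{+} + (K_2 - K_1),
\end{align*}
as is checked by distinguishing the cases $x \le K_1$, $K_1 < x \le K_2$ and $x > K_2$. Second, because $r > 0$ and every admissible $\tau$ is non-negative, $e^{-r\tau} \le 1$. Combining the two, for each admissible $\tau$,
\begin{align*}
\mathbb{E}_{\mathbb{Q}}\!\left[ e^{-r\tau}(K_2 - S_\tau)_{+} \right]
\le \mathbb{E}_{\mathbb{Q}}\!\left[ e^{-r\tau}(K_1 - S_\tau)_{+} \right] + (K_2 - K_1)\,\mathbb{E}_{\mathbb{Q}}\!\left[ e^{-r\tau} \right]
\le A(K_1) + (K_2 - K_1),
\end{align*}
and taking the supremum over $\tau$ gives $A(K_2) \le A(K_1) + (K_2 - K_1)$. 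Feeding in $A(K_1) = K_1 - S_0$ yields $A(K_2) \le K_2 - S_0$, while taking $\tau \equiv 0$ in the representation of $A(K_2)$ gives the reverse bound $A(K_2) \ge (K_2 - S_0)_{+} = K_2 - S_0$. Hence $A(K_2) = (K_2 - S_0)_{+}$, which says exactly that immediate exercise is optimal for the strike $K_2$.

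I do not expect a genuine obstacle here: the points needing care are bookkeeping ones, namely fixing the meaning of ``optimal to exercise immediately'' through the optimal-stopping representation, the (mild) restriction to the in-the-money case $K_1 > S_0$, and the fact that positivity of the interest rate is what makes $e^{-r\tau} \le 1$ and hence drives the bound $A(K_2) - A(K_1) \le K_2 - K_1$. I would also remark that this ``strike-spread'' inequality is itself model-independent --- a short semi-static argument (short the $K_2$-put, hold the $K_1$-put, and exercise the latter whenever the former is exercised) shows that a violation $A(K_2) - A(K_1) > K_2 - K_1$ produces an arbitrage --- so the conclusion can equally be read as: in any model, if $A(K_1) = K_1 - S_0 > 0$ then $A(K) = K - S_0$ for every $K \ge K_1$.
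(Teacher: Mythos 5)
Your proof is correct and uses the same core inequality as the paper's, namely that $(K_2-x)_+\le(K_1-x)_+ + (K_2-K_1)$ combined with $e^{-r\tau}\le 1$ to obtain $A(K_2)\le A(K_1)+(K_2-K_1)$. A small virtue of your version is that you work directly with the supremum rather than assuming an optimal $\tau^*>0$ exists and arguing by contradiction, which lets you sidestep the caveat the paper must add about the case where the optimal stopping time fails to exist; you also make explicit the harmless restriction to $K_1>S_0$, which the paper leaves implicit.
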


\begin{proof}
Assume that it is optimal to exercise the American put option with strike $K_1$ immediately, then, since prices are consistent with some model, there exists a (risk-neutral) pricing measure and we can write
\begin{align*}
A(K_1) = \sup_{0 \leq \tau \leq T}\mathbb{E}\left[e^{-r\tau}(K_1 - S_{\tau})_+\right] = K_1 - S_0.
\end{align*}
Suppose now that there exists a $K_2 > K_1$ for which it is not optimal to exercise the American Put immediately. Assuming that the optimal stopping time exists we denote it by $\tau^* >0$ and have
\begin{align*}
A(K_2) &= \sup_{0 \leq \tau \leq T}\mathbb{E} \left[ e^{-r\tau}(K_2 - S_{\tau})_+ \right]\\
&= e^{-r\tau^*}\mathbb{E}(K_2 - S_{\tau^*})_+ \\
&= e^{-r\tau^*}\mathbb{E}(K_2 - K_1 + K_1 - S_{\tau^*})_+\\
&\leq e^{-r\tau^*}(K_2 - K_1) + e^{-r\tau^*}\mathbb{E}(K_1 - S_{\tau^*})_+\\
&\leq K_2 - S_0.
\end{align*}
This is a contradiction to the assumption that it is not optimal for $K_2$ to exercise immediately and hence we showed that it is also optimal to exercise the American option for all $K_2 > K_1$ immediately. Note that in the case where the optimal stopping time does not exist a similar argument can be used.
\end{proof}

\begin{remark} The upper and lower bounds on the American Put price, given in $(iii)$ and $(iv)$ of Theorem~\ref{nco} respectively, can also be seen to be tight, that is, there exist models that attain the bounds as American Put price function. In the case of the lower bound the following underlying price process satisfies $A(K) = \max\{(K-S_0)_{+},E(K)\}$. Set
\begin{align*}
S_t =
\begin{cases}
e^{-r(T-t)} \mathbb{E}Y  & ,\, t \in [0,T) \\
Y & ,\, t=T
\end{cases}
\end{align*}
where $Y$ is an integrable random variable with distribution $\mu$.  This process grows at the interest rate up to $T$, where it jumps to its final distribution $Y$. The discounted price process $e^{-rt}S_t$ is by definition a martingale with respect to its natural filtration $\mathcal{F}_t^{S}$. As the process grows at the interest rate between the times $0$ and maturity $T$ we know that the payoff obtained by exercising immediately will always exceed the payoff for any time $t \in (0,T)$, hence the only possible stopping times are $0$ and $T$, which gives $A(K) = \max\{(K-S_0)_{+},E(K)\}$. 

In the case of the upper bound the following price process $(S_t)_{t \geq 0}$ has $E(e^{rT}K)$ as American price function. Set
\begin{align*}
S_t =
\begin{cases}
e^{-rT} \mathbb{E}Y  & ,\, t = 0 \\
e^{-r(T-t)}Y & ,\, t \in (0,T]
\end{cases}
\end{align*}
where again $r > 0$ is the interest rate, $T$ the maturity date and $Y$ the integrable final distribution. 
It is straightforward to check that this is indeed a martingale, and if we consider the sequence of stopping times $\tau_n = \frac{1}{n}$, we get

\begin{align*}
A(K) 
&\ge \lim_{n \rightarrow \infty} e^{-rT}\mathbb{E}\left[(e^{rT - r/n}K-Y)_+\right]\\
&= e^{-rT}\mathbb{E}\left[(e^{rT}K - S_T)_+\right]\\
&= E(e^{rT}K),
\end{align*}
as required.
\end{remark}

\section{Sufficiency of the conditions on the American Put price function A}\label{ssc} 

In order to show that the necessary conditions in Theorem~\ref{nco} are also sufficient for the absence of model-independent arbitrage it is enough to determine for any given set of American and European Put prices a market model such that the European and American put option prices satisfy $e^{-rT}\mathbb{E}(K-S_T)_{+} = E(K)$ and $\sup_{0 \leq \tau \leq T} \mathbb{E}(K-S_{\tau})_{+} = A(K)$, respectively. A market model consists of a filtered probability space $(\Omega,(\mathcal{F}_t)_{0 \leq t \leq T},\mathbb{P})$ and an underlying price process $(S_t)_{0 \leq t \leq T}$ where $(e^{-rt}S_t)_{t \geq 0}$ is a martingale under $\mathbb{P}$.

In general it appears to be a harder task to show that the conditions of Theorem~\ref{nco} are also sufficient, particularly if it is assumed that a continuum of option strikes trade. Consequently, we shall consider a slightly restricted setup (although one that is still practically very relevant): henceforth we will assume that we are given American and European prices for a finite number of strikes, from which we will extrapolate general functions $A$ and $E$ for which the conditions of Theorem~\ref{nco} and Lemma~\ref{coe} hold.\footnote{Given a finite set of traded options which are derived from some model, it is not the case that their linear interpolation will satisfy the conditions of Theorem~\ref{nco} and Lemma~\ref{coe} automatically, however it seems plausible that there should be some larger set of strikes which do. Indeed, we believe that, given a set of traded option prices, either we can construct a piecewise linear extension satisfying the conditions of Theorem~\ref{nco} or there exists model-independent arbitrage. However, this is a non-trivial result and we leave a formal proof to subsequent work.}


Contrary to the embedding problem considered in \citet{BUE}, \citet{COU} or \citet{DAH}, where marginals for multiple fixed times are given, the definition of the American put option requires us to incorporate the American prices into $(S_t)_{t\geq0}$ at the unknown optimal stopping time $\tau^{*}$ before the European prices are embedded at maturity  $T$.

Suppose the piecewise linear functions $A$ and $E$ satisfy the conditions of Theorem~\ref{nco} and Lemma \ref{coe} and are given as follows. For each strike $K \geq 0$ the function $E$ is the European Put price corresponding to the discrete marginal distribution $\mu = p_1 \delta_{K^E_1} + ... + p_n \delta_{K^E_n}$ at maturity $T$ with given interest rate $r >0$ and mean $\mathbb{E}^{\mu}(X) = S_0e^{rT}$. 

The function $A$ is given for a finite number of strikes $K_1^A,...,K_m^A$ and interpolated linearly between them. Additionally, we know that $A(K) = K-S_0$ has to hold for (at least) all strikes $K \geq K_n^E e^{-rT}$, as we have by the definition of $\mu$ that the upper bound $E(e^{rT}K)$ coincides for these strikes with the lower bound given by $K-S_0$.
Note further that if we are given a set of strikes $K_1^A,...,K_m^A$ such that $A(K_i^A) > K_i^A - S_0$ for all $i=1,...,m$ then we can always add in a final strike $K_{m+1}$ with $A(K_{m+1}) = K_{m+1}-S_0$ using the extension method from (\ref{exa}) below. The new set of prices $A(K_1),...,A(K_{m+1})$ will then satisfy the conditions of Theorem~\ref{nco} again. Therefore we will assume from now on, without loss of generality, that the final strike, denoted by $K_m^A$ satisfies $A(K_m^A) = K_m^A-S_0$. 
We can then write the functions $A$ and $E$ as
\begin{align}\label{gae} 
A(K) &= \max\{0,s_1^A(K-S_d^1),...,s_{m-1}^A(K-S_d^{m-1}),K-S_0\} \notag \\
E(K) &= \max\{0,s_1^E(K-K_1^E)+d_1,...,s_{n-1}^E(K-K_{n-1}^E)+d_{n-1}, \\ 
         & \hspace{ 13 mm} e^{-rT}K-S_0\}, \notag
\end{align}
where the linear pieces are, without loss of generality, ordered by appearance. In Figure~\ref{fig:gen_setting} below the general setting is depicted, where the given European and American prices as functions of the strike $K$ are denoted by $E$ and $A$ respectively.
\begin{figure}[ht]
	\centering
	\includegraphics[width=\textwidth, trim=4cm 2cm 3cm 2cm, clip=true]{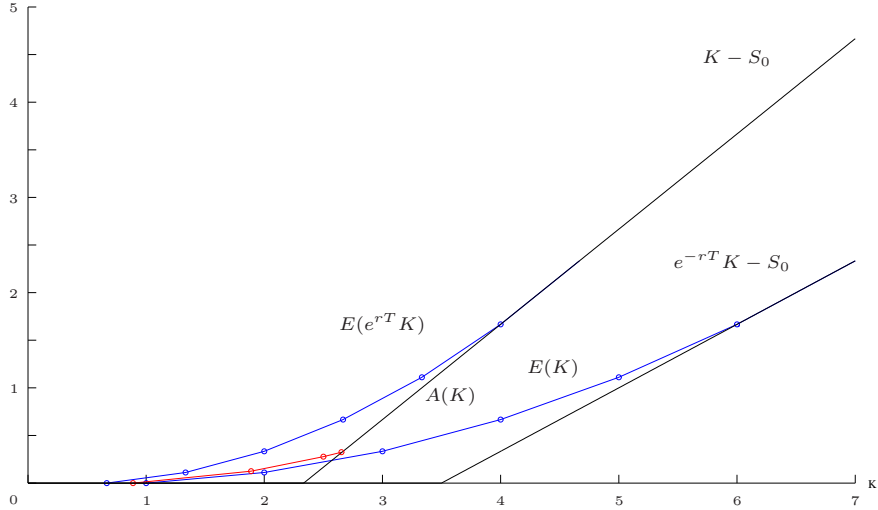}
	\caption{Given American and European prices with bounds}
	\label{fig:gen_setting}
\end{figure}

The idea now is to construct the process $(S_t)_{t\geq 0}$ by embedding one linear piece of $A$ after the other, where their order is determined by their {\it critical times}, which we will define below. After we embed a linear piece of $A$, we will split the initial picture $P$, given by $A$ and $E$, into two subpictures $P_1$ and $P_2$, where the new functions $A_i$ and $E_i$, $i=1,2$,  then again satisfy the conditions of Theorem~\ref{nco} and Lemma~\ref{coe} and due to the special choice of the critical times can be treated separately. As the number of linear pieces of $A$ that are left to embed in the subpictures $P_1$ and $P_2$ is reduced by one in each step and the European $E$ can be embedded at maturity $T$, we can argue inductively that the algorithm embeds $A$ and $E$ in finitely many steps.

\subsection{Algorithm}\label{alg}

In this section we outline the algorithm embedding the fuctions $A$ and $E$, where each step will be explained in more detail in the subsequent sections.
\begin{enumerate}
\item Set $t_{old}^{*} = 0$, $S_0 = \mathbb{E}(e^{-rT}S_T)$.
\item Extend A linearly beyond $\tilde{K} = \inf\{ K\geq 0: A(K) = K-S_0\}$ up to the first atom of $E$ where the necessary condition $$A'(K+)K-A(K) \geq E'(K+)K-E(K)$$ from Theorem~\ref{nco} is violated. From that strike on $A'(K+)$ is determined such that this condition is fulfilled with equality. Denote this extension by $\tilde{A}$ and the number of linear pieces of $\tilde{A}$ by $N_{\tilde{A}}$.
\item Compute the critical time $t_c^{*}$ and the critical strike $K^{*}$, determining the linear piece $s_k^A(K-S_d^{k})$ of $\tilde{A}$, where $k \in\{1,...,N_{\tilde{A}}\}$, that should be embedded next.
\item Embed $s_k^A(K-S_d^{k})$ by jumping the mass $p_d$ to $S_d$ and $p_u$ to $S_u$ at time $t^{*} =t_{old}^{*} +  t_c^{*} $, where $p_d = e^{rt_c^{*}}s_k^A$, $S_d = S_d^{k}$, $p_u = 1 - p_d$, $S_u = \frac{S_0e^{rt_c^{*}}-p_d S_d^k}{p_u}$. For $t_{old}^{*} < t < t^{*}$ set $S_t = e^{r(t-t_{old}^{*})}S_0$. Update $t_{old}^{*} = t^{*}$.
\item Split $\mu$ into $\mu_1$ and $\mu_2$, the given European prices $E$ into $E_1$ and $E_2$ and the given function $A$ into $A_1$ and $A_2$.
\item If $A_1 \neq E_1 \vee (K-S_d)_{+}$ set $A = A_1$, $E= E_1$ and $S_0 = S_d$ then go to $2.$, else embed $E_1$ at T.
\item If $A_2 \neq E_2 \vee (K-S_u)_{+}$ set $A = A_2$, $E=E_2$ and $S_0 = S_u$ then go to $2$., else embed $E_2$ at T.
\end{enumerate}


\subsection{Existence and calculation of the critical time}\label{ecct}

In this section we will construct a method to determine the critical time $t_c^{*}$, which will tell us when to embed the next linear piece of the given function $A$.
The actual jump of $S$ then occurs at $t^{*} = t_{old}^{*} + t_c^{*}$, where $t_{old}^{*}$ is the time where the parent node was embedded or $0$ in the first step. 

As we want to interpret the function $A$ for a fixed strike $K \geq 0$ as the American put option price on an unknown underlying price process $S$, we intend to split the function $A$ at $t^{*}$ into two independent functions $A_1$ and $A_2$ that can again be interpreted as American put option prices, where the underlying price process then starts at time $t^{*}$ in $S_d$ or $S_u$ respectively.

It follows that the contract length for the European Put price functions $E_1$ and $E_2$ has to be modified to $(T-t^{*})$. This directly affects the upper bound $\overline{A}$ given by 
\begin{align}\label{aub} 
\overline{A}(K,t) = E(e^{r((T-t_{old}^{*})-t)}K) 
\end{align}
for $0 \leq t \leq T-t_{old}^{*}$, which will play a crucial role in finding the critical time $t_c^{*}$.

Furthermore, we have the problem that $A$ only provides information on the underlying $S$ up to the strike $K_m^A$ above which exercising $A$ immediately is optimal. This information is not enough though to consider $A_1$ and $A_2$ independently forcing us to generate additional information on the underlying $S$ by extending $A$ beyond $K_m^A$. As long as this extension still satisfies the necessary conditions in Theorem~\ref{nco} this extension will not affect the American Put prices with respect to the underlying $S$, since $K-S_0$ will dominate these payoffs for $K \geq K_m^A$.

By extending $A$ linearly beyond $K_m^A$, only correcting the slope $A'(K+)$ when in an atom of $E$, where the condition $A'(K+)K-A(K) \geq E'(K+)K-E(K)$ is violated, we obtain
\begin{align}\label{exa} 
\tilde{A}(K) = \begin{cases}
A(K) &, 0 \leq K \leq K_m^A\\
s_{m-1}^A(K-S_d^{m-1}) &, K_m^A < K \leq K_p^E\\
\tilde{A}'(K_i^E+)(K-K_i^E) + \tilde{A}(K_i^E) &, K_i^E < K \leq K_{i+1}^E\\
\tilde{A}'(K_{N_E}^E+)(K-K_{N_E}^E) + \tilde{A}(K_{N_E}^E) &, K \geq K_{N_E}^E,
\end{cases}
\end{align}
where $i=p,...,N_E-1$, $\tilde{A}'(K_i^E+) =E'(K_i^E+)+\frac{\tilde{A}(K_i^E) - E(K_i^E)}{K_i^E}$ and $K_p^E$ is the first atom of $E$ after $K_m^A$ where the necessary condition is violated (Fig. \ref{fig:gen_ext}). Further set $N_{\tilde{A}} = N_A + N_E - p$. 

\begin{figure}[ht]
	\centering
	\includegraphics[width=\textwidth, trim=5cm 2cm 3cm 2cm, clip=true]{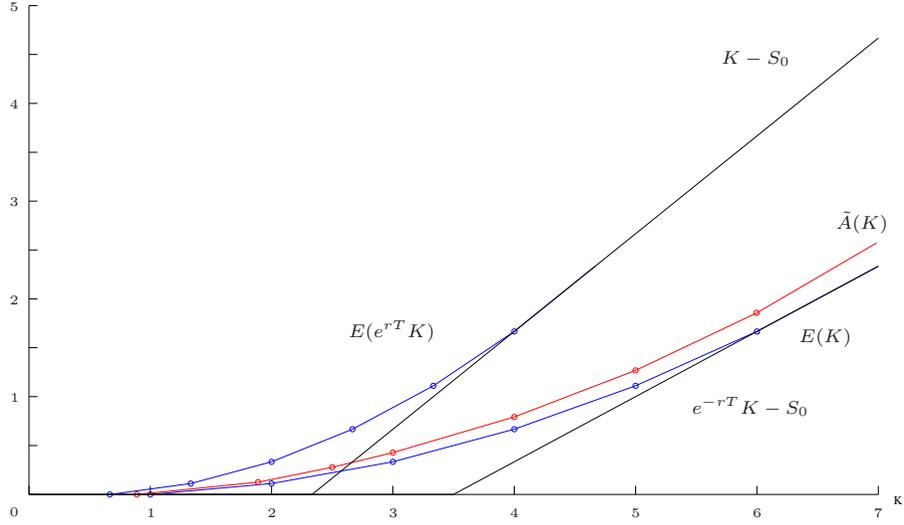}
	\caption{Extension of $A$}
	\label{fig:gen_ext}
\end{figure}

\begin{lemma}\label{eafc} 
Suppose the functions $A$ and $E$ given by (\ref{gae}) satisfy the conditions of Theorem~\ref{nco} and Lemma~\ref{coe}, then $A$ can be extended as in (\ref{exa}) to $\tilde{A}$, where $\tilde{A}$ and $E$ satisfy again the conditions of Theorem~\ref{nco}, except that $\tilde{A}$ no longer has $K-S_0$ as lower bound.
\end{lemma}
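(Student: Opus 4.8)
The plan is to verify that the extension $\tilde A$ defined in (\ref{exa}) inherits each of the properties (i)--(iv) of Theorem~\ref{nco} that we need, working piece by piece along the breakpoints of $\tilde A$, which (by construction) live at $K_m^A$ and at a subset of the atoms $K_p^E, \ldots, K_{N_E}^E$ of $\mu$. The only property we do not claim is the lower bound $\max\{E(K),K-S_0\}$, since the whole point of the extension is to push $\tilde A$ below $K - S_0$ for large $K$; but we do need to keep $E(K)$ as a lower bound, and this should be tracked alongside the other conditions.

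First I would check condition (ii), $\tilde A'(K+)K - \tilde A(K) \geq E'(K+)K - E(K)$, since this is what drives the definition of the corrected slopes. On $[K_m^A, K_p^E]$ we keep the slope $s_{m-1}^A$ of the last genuine linear piece of $A$; by hypothesis condition (ii) held for $A$ at $K_m^A$ and, since $E$ is linear on each interval between consecutive atoms while the left side of (ii) is nondecreasing in $K$ along a fixed linear piece (its derivative in $K$ is $-s_{m-1}^A + E'(K+) \le 0$ exactly when the slopes are ordered correctly — this is the computation to be done carefully), the inequality persists up to the first atom $K_p^E$ where it would fail. From $K_p^E$ onwards the slope is \emph{defined} by $\tilde A'(K_i^E+) = E'(K_i^E+) + (\tilde A(K_i^E) - E(K_i^E))/K_i^E$, which is precisely the choice making (ii) hold with equality at each atom $K_i^E$; one then checks that between atoms the inequality is preserved in the same way, and that $\tilde A(K_i^E) - E(K_i^E) \ge 0$ is maintained inductively (so the corrected slope genuinely exceeds $E'(K_i^E+)$, keeping $\tilde A$ above $E$, i.e. the surviving lower bound).

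Next, convexity and monotonicity (condition (i)): $\tilde A$ agrees with $A$ on $[0,K_m^A]$, so it suffices to show the slopes are nondecreasing across each new breakpoint and stay positive. Across $K_m^A$ the slope is unchanged. Across each atom $K_i^E$ the slope jumps from the previous value to $E'(K_i^E+) + (\tilde A(K_i^E)-E(K_i^E))/K_i^E$; since $E$ is convex its slopes $E'(K_i^E+)$ increase, and the correction term $(\tilde A(K_i^E)-E(K_i^E))/K_i^E$ is nonnegative, so one needs to verify the jump is upward — this follows from combining condition (ii) at $K_i^E$ (the left slope satisfies the inequality) with the definition of the right slope (equality), giving $\tilde A'(K_i^E+) \ge \tilde A'(K_i^E-)$; positivity of all slopes is inherited from $A$ and from $E' \ge 0$. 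For the upper bound (condition (iv)) $\tilde A(K) \le E(e^{rT}K)$: on $[0,K_m^A]$ this is the original bound; for $K > K_m^A$ one uses that $\mu$ is supported in $[0, K_n^E]$ so $E(e^{rT}K) = e^{rT}K - S_0$ there, while $\tilde A$ has slopes bounded by $E'(K+) + (\tilde A(K)-E(K))/K$ which one shows stays at most $e^{rT}$ — again leaning on $\mu$ being supported below $K_n^E$, so $E'(\cdot+) < e^{-rT}$ fails only past the support, combined with Gronwall-type control of the ratio $(\tilde A(K)-E(K))/K$ along each linear piece. The continuity of $\tilde A$ is immediate from the matching endpoint values in (\ref{exa}).

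The main obstacle is the inductive bookkeeping in the third and fourth intervals of (\ref{exa}): one must simultaneously control three quantities along the chain of atoms $K_p^E, \ldots, K_{N_E}^E$ — that $\tilde A$ stays above $E$ (hence $\tilde A(K_i^E) - E(K_i^E) \ge 0$, needed for convexity and for the surviving lower bound), that the slope stays below $e^{-rT}\cdot$(something)\,/\,the relevant bound (needed for the upper bound), and that (ii) is preserved between atoms and not merely at them. Each of these reduces to a short monotonicity-of-a-linear-function computation on a single interval $[K_i^E, K_{i+1}^E]$, but they are coupled through the recursion defining $\tilde A'(K_{i+1}^E+)$ from $\tilde A(K_{i+1}^E)$, so the argument is cleanest phrased as a single induction on $i$ establishing all the invariants at once; the final interval $K \ge K_{N_E}^E$ is then just a limiting case with a constant slope. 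I expect the verification that (ii) holds \emph{strictly between} the atoms (not just at them) to be the subtlest point, since it requires the sign of $-s_k^A + E'(K+)$ to come out right, which in turn is exactly the statement that the corrected slopes never overshoot — a fact that must be extracted from convexity of $E$ together with the equality-at-atoms normalization.
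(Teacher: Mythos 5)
Your outline is headed in roughly the right direction for conditions (i) and (ii) and for the lower bound $\tilde A \geq E$, but there are two genuine problems.

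First, a computational error that leads you to treat as "the subtlest point" something that is actually automatic. You write that the derivative in $K$ of the left side of (ii) along a fixed linear piece is $-s_{m-1}^A + E'(K+)$. In fact, for a linear piece $f(K) = s(K - S_d)$ the quantity $f'(K+)K - f(K)$ is the constant $sS_d$ (the negative of the $y$-intercept), and the same holds for the right side $E'(K+)K - E(K)$ on each interval where $E$ is affine. Both sides of (ii) are therefore piecewise constant, jumping only at kinks, so verifying (ii) at the atoms of $E$ is equivalent to verifying it everywhere --- this is exactly Lemma~\ref{app1} in the appendix. There is no extra sign condition on $-s^A_k + E'(K+)$ to extract, and no delicate "between atoms" step. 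With this correction the whole of your (ii) argument collapses to: the condition holds by definition of $K_p^E$ up to $K_p^E$, and by the choice of $\tilde A'(K_i^E+)$ it holds with equality thereafter, which is what the paper asserts in its opening line.

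Second, and more seriously, your argument for the upper bound (iv) is not a proof. Invoking a "Gronwall-type control of the ratio $(\tilde A(K)-E(K))/K$" is not fleshed out, the stated slope bound is off by a discount factor, and directly comparing $\tilde A$ to $E(e^{rT}K)$ along the chain of atoms is harder than necessary. The paper's route is cleaner and you should aim for it: show $\tilde A(K) \leq A(K)$ for all $K$. On $[K_m^A, K_p^E]$ this is immediate since there $A(K) = K - S_0$ has slope $1$ while $\tilde A$ continues with slope $s_{m-1}^A < 1$ from a point where they agree. For $K \geq K_p^E$, use that $\tilde A'(K+)K - \tilde A(K) = E'(K+)K - E(K)$ (equality by construction) together with condition (ii) for $A$ to get $A'(K+)K - A(K) \geq \tilde A'(K+)K - \tilde A(K)$; combined with $A(K_p^E) \geq \tilde A(K_p^E)$ this forces $A'(K_p^E+) \geq \tilde A'(K_p^E+)$, and then the same argument propagates inductively across the atoms. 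Since $A \leq \overline A(\cdot,0)$ is given, $\tilde A \leq A$ delivers (iv) at once. Your convexity argument at the atoms is essentially equivalent to the paper's identity $\tilde A'(K_{i+1}^E+) = \tilde A'(K_i^E+) + \left(E'(K_{i+1}^E+) - E'(K_i^E+)\right)$, so that part is fine.
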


\begin{proof}
Let us start by pointing out that the condition $\tilde{A}'(K+)K-\tilde{A}(K) \geq E'(K+)K-E(K)$ is trivially fulfilled for all $K\geq 0$ by the choice of the extension $\tilde{A}$.

To see that $\tilde{A}$ is bounded below by $E$ remember that this is fulfilled up to $K_m^A$ by the assumptions on $A$ and $E$. Hence for $E$ to exceed $\tilde{A}$ between $K_m^A$ and $K_p^E$ we would need $E'(K+) > \tilde{A}'(K+)$ which can be ruled out, since we know that $\tilde{A}'(K+)K-\tilde{A}(K) \geq E'(K+)K-E(K)$ and $\tilde{A} \geq E$ holds in $K_m^A$. For $K \geq K_p^E$ we can argue inductively for each of the intervals, since the condition already has to hold in the respective left endpoint of the interval $K_i^E$, $i=p,...,n-1$.

Next we will show that $\tilde{A}(K)$ is bounded above by $\overline{A}(K,0)$, which is given in (\ref{aub}) above. Note that $\overline{A}(K,0) \geq A(K)$ for all $K\geq 0$. We will now show that we actually have $\tilde{A}(K) \leq A(K)$ for all $K \geq 0$. Up to $K_p^E$ this is trivially fulfilled by definition of $A$ and $\tilde{A}$. From $K_p^E$ onwards we have that $\tilde{A}'(K+)K - \tilde{A}(K) = E'(K+)K-E(K)$ and therefore $A'(K+)K-A(K) \geq \tilde{A}'(K+)K-\tilde{A}(K)$ has to hold for all $K \geq 0$ by the assumptions on $A$ and $E$ in Theorem~\ref{nco}. Using the fact that $A(K_p^E) \geq \tilde{A}(K_p^E)$ we can then conclude that we must have $A'(K_p^E+)\geq \tilde{A}'(K_p^E+)$. This allows us now to argue inductively and in the same way as for the lower bound to obtain that $\overline{A}(K,0) \geq \tilde{A}(K)$.

That $\tilde{A}$ is increasing for all $K \geq 0$ is an immediate consequence of the facts that $\tilde{A} \geq E$ and that $E' \geq 0$ as
\begin{align}\label{saex} 
\tilde{A}'(K_i+) =E'(K_i^E+)+\frac{\tilde{A}(K_i^E) - E(K_i^E)}{K_i^E} > 0
\end{align}
for $i \geq p$.

To prove that $\tilde{A}$ is convex it is enough to show that the slope of $\tilde{A}$ is increasing for any strike $K \geq K_p^E$, as we know already that $A$ is convex. Note that we can write
\begin{align*}
\tilde{A}(K_{i+1}^E) &= \tilde{A}'(K_i^E+)(K_{i+1}^E-K_i^E) + \tilde{A}(K_i^E)\\
E(K_{i+1}^E) &= E'(K_i^E+)(K_{i+1}^E-K_i^E) + E(K_i^E),
\end{align*}
since both $\tilde{A}$ and $E$ are piecewise linear functions. Further we have that $$E'(K_i^E+)K_i^E + E(K_i^E) = \tilde{A}'(K_i^E+)K_i^E + \tilde{A}(K_i^E)$$ for strikes $K_i^E \geq K_p^E$. Combined with the definition of the slope of $\tilde{A}$ from (\ref{saex}) we can then conclude that for $i \geq p$ we have
\begin{align*}
\tilde{A}'(K_{i+1}+) &=E'(K_{i+1}^E+)+\frac{\tilde{A}(K_{i+1}^E) - E(K_{i+1}^E)}{K_{i+1}^E}\\
			&=\tilde{A}'(K_i^E+) + (E'(K_{i+1}^E+) - E'(K_i^E+)),
\end{align*} 
which is increasing as $E'$ is and therefore $\tilde{A}$ has to be convex again.
\end{proof}

To determine a suitable critical time $t_c^{*}$, where the next linear piece of $A$ is embedded, we recall two important properties that we want to be fulfilled. First of all the underlying price process $S$ has to be a martingale and secondly we want the two subpictures, obtained by splitting at time $t^{*}$ in the {\it critical strike} $K^{*}$, to be independent. In this context we refer to the subpictures as being independent when no mass is exchanged between the two pictures $P_1$ and $P_2$ after $t^{*}$, allowing us to consider them separately.

We choose the critical time $t_c^{*}$ to be the first time $t$, where waiting any longer would result in $\overline{A}(K,t+\epsilon) < \tilde{A}(K)$ for some $K >0$ and any $\epsilon >0$, where $\overline{A}(K,t)$ denotes the upper bound on $\tilde{A}$ (see Fig.~\ref{fig:gen_crittime} below). We will show in Lemma~\ref{cct} that the critical time $t_c^{*}$ exists and is finite. Furthermore, we will see that the aforementioned properties are then satisfied.

\begin{figure}[ht]
	\centering
	\includegraphics[width=\textwidth, trim=5cm 2cm 1cm 2cm, clip=true]{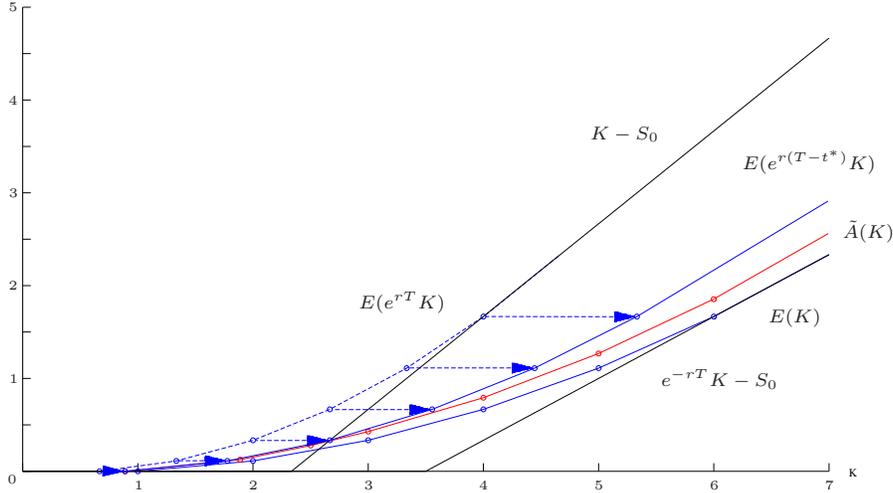}
	\caption{Critical time $t^{*}$ to embed the next piece of $A$}
	\label{fig:gen_crittime}
\end{figure}

Before we show the existence of the critical time $t_c^{*}$ note that the last linear piece of $A$, given by $K-S_0$, is already realised in the payoff as exercise at time $t_{old}^{*}$, since the the underlying process $S$ starts in $S_0$. Therefore the linear piece $K-S_0$ can be omitted when looking for the critical time. In the sequel we will refer to realising a payoff by jumping mass $p_d$ to $S_d$ (and $p_u$ to $S_u$) as embedding a linear piece of the American.

\begin{lemma}\label{cct} 
Suppose the given functions $A$ and $E$ satisfy the necessary conditions of Theorem~\ref{nco} and Lemma~\ref{coe}, where $A$ is extended to $\tilde{A}$ as in (\ref{exa}) and the European Put price function $E$ with contract length $T-t_{old}^{*}$ is given by the marginal distribution $\mu= p_1 \delta_{K^E_1} + ... + p_n \delta_{K^E_n}$ with maturity $T$.
Assume also that the upper bound $\overline{A}$ is given by $\overline{A}(K,t) =E(e^{r(T-t_{old}^{*}-t)}K)$.

Then we have that the critical time $t_c^{*}$ exists, is attained in an atom of the upper bound $\overline{A}$ and can be written as
\begin{align}\label{crt} 
t_c^{*} &= \inf_{i,j}  t^{i,j} \\
	  &= \inf_{i,j} \inf \{0 \leq t \leq T-t_{old}^{*}: \overline{A}(u_i,t) < f_j(u_i)\}, \notag
\end{align}
where $u_i = K_i^E e^{-r(T-t_{old}^{*}-t)}$, $i \in\{1,...,n\}$, $j \in\{1,...N_{\tilde{A}}\}$  and $f_j$ the $j$-th linear piece of $\tilde{A}$.
\end{lemma}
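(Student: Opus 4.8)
The plan is to reduce the property defining the critical time --- that the time-$t$ upper bound $\overline{A}(\cdot,t)$ still dominates $\tilde{A}$ on all of $[0,\infty)$ --- to a finite family of scalar inequalities, one for each pair consisting of an atom $u_i(t)$ of $\overline{A}(\cdot,t)$ and a linear piece $f_j$ of $\tilde{A}$. Once this reduction is in place, both the formula (\ref{crt}) for $t_c^{*}$ and the assertion that first contact occurs at an atom of $\overline{A}$ will fall out.

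First I would record the structure of $\overline{A}$. Writing $\rho(t)=r(T-t_{old}^{*}-t)\ge 0$, for each fixed $K$ the map $t\mapsto\overline{A}(K,t)=E(e^{\rho(t)}K)$ is continuous and non-increasing, since $E$ is non-decreasing and $e^{\rho(t)}$ is non-increasing; and for each fixed $t$ the map $K\mapsto\overline{A}(K,t)$ is convex and piecewise linear, being the composition of $E$ with a linear bijection, so its kinks are exactly the images $u_i(t)=K_i^{E}e^{-\rho(t)}$ of the kinks $K_1^{E}<\dots<K_n^{E}$ of $E$, and $\overline{A}(\cdot,t)$ is affine on $[0,u_1(t)]$, on each $[u_i(t),u_{i+1}(t)]$, and on $[u_n(t),\infty)$. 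Two elementary identities carry the argument: $\overline{A}(u_i(t),t)=E(e^{\rho(t)}u_i(t))=E(K_i^{E})$, which does \emph{not} depend on $t$; and, because $E$ coincides with its affine lower bound (of intercept $-S_0$) to the right of $K_n^{E}$, the restriction of $\overline{A}(\cdot,t)$ to $[u_n(t),\infty)$ is affine with intercept $-S_0$. Since $\tilde{A}$ is convex and piecewise linear it equals $\max_{1\le j\le N_{\tilde{A}}}f_j$, and as every $f_j$ has non-negative slope the maps $t\mapsto f_j(u_i(t))$ are continuous and non-decreasing, so each $t^{i,j}$ in (\ref{crt}) is well defined. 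By Lemma~\ref{eafc}, $\overline{A}(\cdot,0)\ge\tilde{A}$, and by monotonicity in $t$ the set $\{t\in[0,T-t_{old}^{*}]:\overline{A}(\cdot,t)\ge\tilde{A}\}$ is a closed interval $[0,t_c^{*}]$ with $0\le t_c^{*}\le T-t_{old}^{*}$; this already gives existence and finiteness.

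The crux is the equivalence: for $0\le t\le T-t_{old}^{*}$, $\overline{A}(\cdot,t)\ge\tilde{A}$ on $[0,\infty)$ if and only if $\overline{A}(u_i(t),t)\ge f_j(u_i(t))$ for all $i$ and $j$. The forward implication is trivial ($f_j\le\tilde{A}$); for the converse, since $\tilde{A}=\max_j f_j$ it suffices to dominate each $f_j$, which I would check region by region. On $[u_1(t),u_n(t)]$: on each $[u_i(t),u_{i+1}(t)]$ both $\overline{A}(\cdot,t)$ and $f_j$ are affine and are ordered at the two endpoints by hypothesis, hence on the whole subinterval. On $[0,u_1(t)]$, where $\overline{A}(\cdot,t)$ is affine: $\tilde{A}-\overline{A}(\cdot,t)$ is convex on a compact interval, vanishes at $0$ (because $A(0)=0$, forced by $0\le E(0)\le A(0)\le\overline{A}(0,0)=E(0)=0$, so $\tilde{A}(0)=\overline{A}(0,t)=0$) and is $\le 0$ at $u_1(t)$, hence $\le 0$ throughout. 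The delicate region is the ray $[u_n(t),\infty)$: there, endpoint values alone do not suffice and one needs a comparison of slopes at infinity --- but this comparison is \emph{equivalent} to the single atom inequality at $u_n(t)$ with $j=N_{\tilde{A}}$, because $\overline{A}(\cdot,t)$ on this ray and the last piece $f_{N_{\tilde{A}}}$ of $\tilde{A}$ share the intercept $-S_0$. Indeed, by construction (\ref{exa}) the relation $\tilde{A}'(K+)K-\tilde{A}(K)=E'(K+)K-E(K)$ holds for large $K$, which says precisely that $f_{N_{\tilde{A}}}$ and the final piece of $E$ have equal intercepts (and in the degenerate case where no slope correction is needed, convexity of $A$ together with $A(K_m^{A})=K_m^{A}-S_0$ forces $f_{N_{\tilde{A}}}(K)=K-S_0$, again of intercept $-S_0$). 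Two affine functions with a common intercept are ordered at $0$; since by hypothesis they are also ordered at the positive point $u_n(t)$, the slope of $f_{N_{\tilde{A}}}$ is at most that of $\overline{A}(\cdot,t)$ on the ray. Hence $\phi:=\tilde{A}-\overline{A}(\cdot,t)$ is convex on $[u_n(t),\infty)$ and eventually affine with non-positive slope, so $\phi$ is non-increasing there, and $\phi(u_n(t))\le 0$ (hypothesis with $i=n$) then gives $\phi\le 0$ on the whole ray. This proves the equivalence.

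Finally, using $\overline{A}(u_i(t),t)=E(K_i^{E})$, the condition $\overline{A}(\cdot,t)\ge\tilde{A}$ reads ``$f_j(u_i(t))\le E(K_i^{E})$ for all $i,j$''; each of these finitely many constraints holds precisely on an interval $[0,t^{i,j}]$ with $t^{i,j}$ as in (\ref{crt}) (and is vacuous on $[0,T-t_{old}^{*}]$ if $t^{i,j}>T-t_{old}^{*}$), so their intersection is $[0,\inf_{i,j}t^{i,j}]$ and therefore $t_c^{*}=\inf_{i,j}t^{i,j}$. When this infimum is a genuine crossing, attained at some $(i^{*},j^{*})$ with $t^{i^{*},j^{*}}\le T-t_{old}^{*}$, continuity gives $f_{j^{*}}(u_{i^{*}}(t_c^{*}))=E(K_{i^{*}}^{E})=\overline{A}(u_{i^{*}}(t_c^{*}),t_c^{*})$, while $\overline{A}(\cdot,t_c^{*})\ge\tilde{A}\ge f_{j^{*}}$; so all three agree at $u_{i^{*}}(t_c^{*})$ and first contact occurs at that atom of $\overline{A}$. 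In the remaining case $t_c^{*}=T-t_{old}^{*}$ one has $\overline{A}(\cdot,t_c^{*})=E\ge\tilde{A}\ge E$, whence $\tilde{A}\equiv E$ and contact occurs at every atom. I expect the main obstacle to be exactly this tail analysis: spotting that the intercept-matching built into (\ref{exa}) turns an ``at infinity'' slope condition into one more of the atom conditions, which is what makes (\ref{crt}) come out clean; the remaining work --- identifying the kinks of $\overline{A}(\cdot,t)$, the interpolation on the bounded pieces, and the endpoint bookkeeping --- is routine.
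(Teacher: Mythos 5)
Your proof is correct and follows the same strategy as the paper's --- reduce the domination condition to finitely many comparisons at the atoms of $\overline{A}$ using piecewise linearity and convexity, then use the convergence $\overline{A}(K,t)\to E(K)$ as $t\to T-t_{old}^{*}$ together with $\overline{A}(K,0)\ge\tilde{A}(K)\ge E(K)$ to get existence. Where you genuinely add value is in the region-by-region reduction and, in particular, the tail $[u_n(t),\infty)$: the paper dismisses the ``atom of $\overline{A}$'' claim as a simple consequence of convexity, but on that unbounded ray endpoint comparisons alone do not close the argument --- one needs the slope (at infinity) comparison as well, and you correctly observe that the intercept-matching built into the extension (\ref{exa}) (both the last piece of $\tilde{A}$ and the last ray of $\overline{A}(\cdot,t)$ pass through $(0,-S_0)$) turns that slope condition into exactly the atom inequality at $u_n(t)$. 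That is a real subtlety the paper's one-line justification glosses over, and your treatment makes the lemma's assertion --- that first contact occurs \emph{at an atom of $\overline{A}$}, as opposed to escaping to infinity or appearing only at a kink of $\tilde{A}$ --- airtight. One small bookkeeping point: near the origin you should account for the trivial zero piece $\max\{0,\dots\}$ in (\ref{gae}), so that $\tilde{A}=\max_j f_j$ is literally true with the indexing used in (\ref{crt}); but since $\tilde{A}$ and $\overline{A}(\cdot,t)$ both vanish identically on a neighbourhood of $0$, this does not affect the argument.
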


\begin{proof}
That the critical time $t_c^{*}$, if it exists, occurs in an atom of the upper bound is a simple consequence of convexity of the functions $\overline{A}$ and $\tilde{A}$. Hence the critical time $t_c^{*}$, should it exist, is given by (\ref{crt}).

Since we know that $\overline{A}(K,0) \geq \tilde{A}(K) \geq E(K)$ and
\begin{align*} 
\overline{A}(K,t) = E(e^{r((T-t_{old}^{*})-t)}K) \rightarrow E(K)
\end{align*}
for all $K\geq 0$, as $t \rightarrow (T-t_{old}^{*})$, the representation in (\ref{crt}) guarantees the existence of an $i$ and $j$ such that $t^{i,j} < \infty$ given that $\tilde{A}$ has not been embedded completely yet.
\end{proof}

Hence we define the critical strike $K^{*}$, where we will split the picture, to be given by the time-$t_c^{*}$ value of the smallest atom of $\overline{A}$ which intersects at the critical time $t_c^{*}$ with $\tilde{A}$, i.e.
\begin{align}
K^{*} = \inf \{K \geq 0 : \overline{A}(K,t_c^{*}) = \tilde{A}(K)\}.
\end{align}

The following lemma will give us now a simple way of determining $\inf_i t^{i,j}$, where $i \in \{1,...,N_E\}$ and $j \in \{1,...,N_{\tilde{A}}\}$, thereby highlighting the close connection between the necessary condition
\begin{align}
\tilde{A}'(K+)K-\tilde{A}(K) \geq E'(K+)K-E(K)
\end{align}
and the embedding time $t^{*}$ for a fixed linear piece $f_j$ of $\tilde{A}$.

\begin{prop}\label{cta} 
Suppose the given functions $A$ and $E$ satisfy the necessary conditions of Theorem~\ref{nco} and Lemma~\ref{coe}, where $A$ is extended to $\tilde{A}$ as in (\ref{exa}) and the European Put price function $E$ with contract length $T-t_{old}^{*}$ is given by the marginal distribution $\mu= p_1 \delta_{K^E_1} + ... + p_n \delta_{K^E_n}$ with maturity $T$.

Furthermore, assume, without loss of generality, that the linear piece of $\tilde{A}$ for which we want to find the critical time is given by $f_j(K)=s_j^A K-d$ and that the European $E$ coincides in $[K_i^E,K_{i+1}^E]$ with $g_i(K) = s_i^E K - d_i$ then we have
\begin{align}
\inf_i t^{i,j} = \frac{1}{r}\ln \left(\frac{s_{i^{*}}^E}{s_j^A}+\frac{d-d_{i^{*}}}{s_j^A}\frac{1}{K_{i^{*}}^E} \right) + (T-t_{old}^{*}),
\end{align}
where $i^{*} = \min\{1 \leq i \leq n : f_j'(K_i^E+)K_i^E-f_j(K_i^E) < g_i'(K_i^E+)K_i^E-g_i(K_i^E)\}$ or equivalently $i^{*} = \min \{1 \leq i \leq n : d < d_i\}$.
\end{prop}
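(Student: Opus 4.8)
The plan is to reduce the statement, for each fixed pair $(i,j)$, to an explicit one-variable equation, and then to carry out the minimisation over $i$ by a convexity argument for $E$ that brings in the quantity from condition~(ii) of Theorem~\ref{nco}. First I would follow the two functions in the definition of $t^{i,j}$ along the path $t\mapsto u_i(t):=K_i^E e^{-r(T-t_{old}^*-t)}$. By the very choice of $u_i$ the upper bound collapses to a constant, $\overline A(u_i(t),t)=E\big(e^{r(T-t_{old}^*-t)}u_i(t)\big)=E(K_i^E)$ for all admissible $t$, while $f_j(u_i(t))=s_j^A K_i^E e^{-r(T-t_{old}^*-t)}-d$ is continuous and strictly increasing in $t$, with $f_j(u_i(0))\le\tilde A(u_i(0))\le\overline A(u_i(0),0)=E(K_i^E)$ (using $f_j\le\tilde A$ and Lemma~\ref{eafc}) and $f_j(u_i(T-t_{old}^*))=f_j(K_i^E)$. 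Hence $\{t:\overline A(u_i,t)<f_j(u_i)\}$ is nonempty iff $f_j(K_i^E)>E(K_i^E)$, in which case $t^{i,j}$ equals the unique solution of $E(K_i^E)=s_j^A K_i^E e^{-r(T-t_{old}^*-t)}-d$, which lies in $[0,T-t_{old}^*]$ thanks to the bound at $t=0$; solving and substituting $E(K_i^E)=g_i(K_i^E)=s_i^E K_i^E-d_i$ gives
\begin{align*}
t^{i,j}=\frac1r\ln\!\left(\frac{s_i^E}{s_j^A}+\frac{d-d_i}{s_j^A}\frac1{K_i^E}\right)+(T-t_{old}^*),
\end{align*}
whereas $t^{i,j}=\infty$ if $f_j(K_i^E)\le E(K_i^E)$. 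Writing $\phi(i)$ for the argument of the logarithm, the $t=0$ bound shows $\phi(i)\ge e^{-r(T-t_{old}^*)}>0$, and the finite and infinite cases correspond to $\phi(i)<1$ and $\phi(i)\ge1$.

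Next, since $x\mapsto\tfrac1r\ln x+(T-t_{old}^*)$ is increasing, $\inf_i t^{i,j}$ is attained at the admissible $i$ minimising $\phi(i)$, equivalently minimising $\psi(K_i^E)$ with $\psi(K):=\tfrac{E(K)+d}{K}$. I would then study $\psi$ on $(0,\infty)$: it is continuous with $\psi'(K)=\tfrac{E'(K)K-E(K)-d}{K^2}$, and on each linear piece $[K_i^E,K_{i+1}^E)$ of $E$ one has $E'(K)K-E(K)\equiv d_i$, so $\psi'$ has constant sign $\operatorname{sgn}(d_i-d)$ there. Convexity of $E$ makes the numbers $d_i=E'(K_i^E+)K_i^E-E(K_i^E)$ strictly increasing in $i$ (the jump at $K_i^E$ equals $(s_i^E-s_{i-1}^E)K_i^E>0$), so with $i^*=\min\{i:d<d_i\}$ the function $\psi$ is non-increasing on $[K_1^E,K_{i^*}^E]$ and strictly increasing on $[K_{i^*}^E,\infty)$; its minimum over $[K_1^E,\infty)$, in particular over the atoms $\{K_i^E\}$, is therefore attained at $K_{i^*}^E$. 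This gives $\min_i\phi(i)=\phi(i^*)$ and hence the asserted formula. The equivalence of the two descriptions of $i^*$ is then the one-line identity $f_j'(K_i^E+)K_i^E-f_j(K_i^E)=d$ versus $g_i'(K_i^E+)K_i^E-g_i(K_i^E)=d_i$.

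I expect the minimisation step to be the main obstacle: one has to spot that the sign of $\psi'$ is controlled by $E'(K)K-E(K)$ --- precisely the left-hand side of the necessary condition~(ii) --- and that by convexity of $E$ this is a non-decreasing step function equal to $d_i$ on the $i$-th linear piece, which forces $\psi$ to be unimodal with its trough at the first atom where $d_i$ overtakes $d$. The remaining points are the bookkeeping of the first paragraph and a routine check of the degenerate situations (no index with $d<d_i$, or $\phi(i^*)\ge1$, i.e.\ the piece $f_j$ need not be embedded at all), which fit into the conventions already used around Lemma~\ref{cct}.
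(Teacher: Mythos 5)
Your proof is correct and follows essentially the same approach as the paper: both derive the candidate time $t^{i,j}$ by observing that the upper bound evaluated at the moving atom $u_i(t)$ is constant (equal to $E(K_i^E)$) and solving the resulting one-variable equation, and both then locate the minimiser over $i$ by noting that the sign of $d-d_i$ governs the monotonicity. Your repackaging of the minimisation via the unimodal map $\psi(K)=(E(K)+d)/K$, whose derivative is controlled by $E'(K)K-E(K)-d$, is a somewhat cleaner way to organise the paper's piece-by-piece monotonicity induction, but the underlying idea is the same.
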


\begin{proof}
The $i$-th linear piece $g_{t,i}(K)$ of the upper bound $\overline{A}(e^{-r((T-t_{old}^{*})-t)}K,t)$ at time $t$ in the interval $[K_i^E e^{-r((T-t_{old}^{*})-t)},K_{i+1}^E e^{-r((T-t_{old}^{*})-t)}]$ is given by $g_{t,i}(K) = g_i(e^{r((T-t_{old}^{*})-t)}K)$. To determine for any fixed linear piece $g_{t,i}(K)$, $i \in\{1,...,n\}$ the time $t^{i,j}$ when the atom $K_i^E e^{-r((T-t_{old}^{*})-t)}$ intersects with $f_j$ we rewrite $g_{t,i}(K)$ as follows
\begin{align*}
g_{t,i}(K)=s_i^E e^{r(T-t_{old}^{*})} K - d_i - s_i^E(e^{r(T-t_{old}^{*})}-e^{r((T-t_{old}^{*})-t)})K,
\end{align*}
which can then be interpreted as a clockwise rotation about the fixed point $(0,-d_i)$, as $t$ increases. In addition we know the strike $\hat{K}$ where the atom $K_i^E e^{-r((T-t_{old}^{*})-t)}$ has to hit $f_j$, since the value of $g_{t,i}(K)$ remains unchanged in the atom $K_i^E e^{-r((T-t_{old}^{*})-t)}$ over time, as we have $\overline{A}(K,t) = E(e^{r((T-t_{old}^{*})-t)}K)$. This allows us to obtain the candidate time
\begin{align*}
t^{i,j}= \frac{1}{r} \ln \left(\frac{s_i^E}{s_j^A}+\frac{d-d_i}{s_j^A} \frac{1}{K_i^E}\right) + (T-t_{old}^{*})
\end{align*}
by setting $g_{t,i}(\hat{K}) = g_i(K_i^E)$ and solving for $t$. 

This result now tells us that $t^{i,j}$ is a decreasing function of $K_i^E$ for $d > d_i$ as $s_i^E$,$s_j^A$, $r$ and $T$ are all positive constants, implying that for the two consecutive atoms $K_i^E$ and $K_{i+1}^E$, lying on the same linear piece $g_i$, the right atom $K_{i+1}^E$ will give a smaller candidate time. As $K_{i+1}^E$ is also the left-side endpoint of the next linear piece we can conclude by induction that as long as a linear piece $g_k$, $k\geq i$, still satisfies $d > d_k$ its right-side endpoint will attain a smaller candidate time than any atom before.

Analogously we see that for $d < d_i$ the function $t^{i,j}$ is increasing in $K_i^E$. Hence the critical time has to be attained in the atom $K_{i^{*}}^E$, which is the rightmost atom still lying on a linear piece $g_k$ satisfying $d \geq d_k$, but at the same time is the first atom lying on a linear piece $g_{k+1}$ where $d < d_{k+1}$. The existence of this atom $K_{i^{*}}^E$ is guaranteed by the fact that $d_n = S_0$, whereas $d < S_0$ for any linear piece of $A$ that is not embedded yet.
\end{proof}

\begin{remark}\label{roct} 
\begin{enumerate}[(i)]
\item This result implies that the critical time for a fixed linear piece $f_i$ of $A$ is attained in the  kink of $\overline{A}$ which corresponds to the European strike at which the Legendre-Fenchel condition between $f_i$ and $E$ is violated for the first time. Note that this is not a contradiction to the Legendre-Fenchel condition of Theorem~\ref{nco}, it simply means that the kink of $\overline{A}$ will not hit the linear piece $f_i$ in the interval where $A=f_i$, but to the right of it.

\item As it is possible that the upper bound $\overline{A}$ intersects with $\tilde{A}$ at the critical time $t_c^{*}$ in a kink of $\tilde{A}$ we need to specify which of the two linear pieces of $\tilde{A}$ we will embed. Proposition~\ref{cta} tells us now that we have to take the left linear piece given by $\tilde{A}'(K-)(K-K^{*})+\tilde{A}(K^{*})$.
\end{enumerate}
\end{remark}

\subsection{The splitting procedure}

After we determined the embedding time $t^{*} = t_{old}^{*} + t_c^{*}$ and the critical strike $K^{*}$ we will divide the functions $A$ and $E$ into two separate parts $A_{1}$, $A_{2}$, and $E_{1}$, $E_{2}$ respectively, such that $A_i$, $E_i$ $i \in \{1,2\}$ satisfy again all the conditions in Theorem~\ref{nco} and from which it will be possible to recover the initial functions $A$ and $E$.

\subsubsection{Splitting of the European Put option prices $E$}

To obtain $E_{1}$ and $E_{2}$ from $E$ we have to split $\mu$, the marginal distribution given at maturity $T$. Since the critical strike $K^{*}$ is given in time-$t^{*}$ value, the respective atom of $\mu$, where we have to split, is $K^{*}e^{r(T-t^{*})}$. The following lemma will show how to split $\mu$ into $\mu_1$ and $\mu_2$ and how to recover $E$ from $E_1$ and $E_2$.

\begin{prop}\label{eus} 
Assume the given functions $A$ and $E$ satisfy the necessary conditions of Theorem~\ref{nco} and Lemma~\ref{coe}, where $A$ is extended to $\tilde{A}$ as in (\ref{exa}) and the European Put price function $E$ with contract length $T-t_{old}^{*}$ is given by the marginal distribution $\mu= p_1 \delta_{K^E_1} + ... + p_n \delta_{K^E_n}$ with mean $\mathbb{E}^{\mu}(X)=e^{r(T-t_{old}^{*})}S_0$ at maturity $T$. Suppose further that the time of the next jump $t^{*}$, the critical time $t_c^{*}$ and the associated critical strike $K^{*}$ were determined as in Section~\ref{ecct} at which point the linear piece $s_k^A(K-S_d^k)$ of $\tilde{A}$ is embedded by jumping the mass $p_d$ to $S_d$ and $p_u$ to $S_u$, where $p_d$, $S_d$, $p_u$ and $S_u$ are given in Section~\ref{alg}. For the time between the jumps set the underlying price process $S_t = \mathbb{E}^{\mu}(X)e^{-r(T-t)}$, where $t_{old}^{*} < t < t^{*}$.

Then we can write $\mu = p_d \mu_1 + p_u \mu_2$, where $\mu_1$ and $\mu_2$ are given by
\begin{align}\label{mu1}
\mu_1 = p_d^{-1}\left[\mu_{\big| [K^E_1,K^{*}e^{r(T-t^{*})})} + (p_d - \mathbb{P}(S_T < K^{*}e^{r(T-t^{*})})) \delta_{K^{*}e^{r(T-t^{*})}}\right]
\end{align}
and
\begin{align}\label{mu2}
\mu_2 = p_u^{-1}\left[(\mathbb{P}(S_T \leq K^{*}e^{r(T-t^{*})})- p_d) \delta_{K^{*}e^{r(T-t^{*})}} + \mu_{\big| (K^{*}e^{r(T-t^{*})},K^E_n]}\right],
\end{align}
and satisfy $\mathbb{E}^{\mu_1}(e^{-r(T-t^{*})}X_1)=S_d $ and $\mathbb{E}^{\mu_2}(e^{-r(T-t^{*})}X_2)=S_u$. Dividing the distribution $\mu$ into $\mu_1$ and $\mu_2$ the European Put option $E$ with maturity $T$ can be written as 
\begin{align}\label{sef} 
E^{\mu}(K) = e^{-rt_c^{*}}\left[p_d E_1^{\mu_1}(K) + p_u E_2^{\mu_2}(K)\right], 
\end{align}
where $E_1$ and $E_2$ are European Put options starting at $t^{*}$, having maturity $T$ and satisfying the conditions in Lemma~\ref{coe}.
\end{prop}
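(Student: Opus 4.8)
\emph{Sketch of the intended argument.} I would split the claim into five pieces: (a) the mixture identity $\mu=p_{d}\mu_{1}+p_{u}\mu_{2}$; (b) $\mu_{1},\mu_{2}$ are probability measures on $\mathbb{R}^{+}$; (c) the barycentre identities $\mathbb{E}^{\mu_{1}}(e^{-r(T-t^{*})}X_{1})=S_{d}$ and $\mathbb{E}^{\mu_{2}}(e^{-r(T-t^{*})}X_{2})=S_{u}$; (d) $E_{1},E_{2}$ satisfy Lemma~\ref{coe}; and (e) the reconstruction formula (\ref{sef}). Write $\kappa=K^{*}e^{r(T-t^{*})}$ for the atom of $\mu$ at which we split, and recall $E^{\mu}(K)=e^{-r(T-t_{old}^{*})}\int(K-x)_{+}\,\mu(dx)$, so that $E^{\mu}$ is convex with $\big(E^{\mu}\big)'(K-)=e^{-r(T-t_{old}^{*})}\mu([0,K))$ and $\big(E^{\mu}\big)'(K+)=e^{-r(T-t_{old}^{*})}\mu([0,K])$. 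Items (a) and (e) are bookkeeping: the two coefficients of $\delta_{\kappa}$ in (\ref{mu1})--(\ref{mu2}) add to $\mathbb{P}(S_{T}=\kappa)=\mu(\{\kappa\})$ while the restricted pieces reassemble $\mu$, and (\ref{sef}) then follows from linearity of $K\mapsto\int(K-x)_{+}\,\mu_{i}(dx)$ together with $e^{-r(T-t^{*})}=e^{-rt_{c}^{*}}e^{-r(T-t_{old}^{*})}$. Item (d) follows from (b)--(c) by applying Lemma~\ref{blr} to each $\mu_{i}$ with horizon $T-t^{*}$ and starting value $S_{d}$ (resp.\ $S_{u}$). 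So the real content is (b) and (c).

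Both (b) and (c) rest on one geometric fact about the critical strike. By construction $\overline{A}(K,t_{c}^{*})=E^{\mu}\big(e^{r(T-t^{*})}K\big)$, and by definition of $K^{*}$ — using Remark~\ref{roct}(ii) to select the left piece at a kink — the embedded linear piece $f_{k}(K)=s_{k}^{A}(K-S_{d}^{k})$ of $\tilde{A}$ satisfies $f_{k}(K^{*})=\tilde{A}(K^{*})=\overline{A}(K^{*},t_{c}^{*})=E^{\mu}(\kappa)$. Since $f_{k}$ is a linear piece of the convex function $\tilde{A}$ it lies below $\tilde{A}$ everywhere, and $\tilde{A}\le\overline{A}(\cdot,t_{c}^{*})$ by the choice of $t_{c}^{*}$ (cf.\ Lemma~\ref{cct}); hence $f_{k}$ is a supporting line of the convex function $\overline{A}(\cdot,t_{c}^{*})$ at $K^{*}$, so $\overline{A}'(K^{*}-,t_{c}^{*})\le s_{k}^{A}\le\overline{A}'(K^{*}+,t_{c}^{*})$. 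Differentiating $\overline{A}(K,t_{c}^{*})=E^{\mu}\big(e^{r(T-t^{*})}K\big)$ and inserting the derivative formulas for $E^{\mu}$, this chain of inequalities reduces (using $e^{r(T-t^{*})}e^{-r(T-t_{old}^{*})}=e^{-rt_{c}^{*}}$) to
\[
\mathbb{P}(S_{T}<\kappa)\ \le\ e^{rt_{c}^{*}}s_{k}^{A}=p_{d}\ \le\ \mathbb{P}(S_{T}\le\kappa).
\]
This is exactly what (b) needs: the coefficients of $\delta_{\kappa}$ in (\ref{mu1})--(\ref{mu2}) are nonnegative, and the normalisations $p_{d}^{-1},p_{u}^{-1}$ make the total masses equal to $1$; a short separate check (using $s_{k}^{A}>0$ and the fact that the piece $K-S_{0}$, which would force $p_{d}=1$, is never embedded) gives $p_{d}\in(0,1)$ so those normalisations are legitimate.

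For (c) I would expand the same touching identity $f_{k}(K^{*})=E^{\mu}(\kappa)$ more carefully, writing $E^{\mu}(\kappa)=e^{-r(T-t_{old}^{*})}\big[\kappa\,\mathbb{P}(S_{T}<\kappa)-\int_{[0,\kappa)}x\,\mu(dx)\big]$, $s_{k}^{A}=e^{-rt_{c}^{*}}p_{d}$ and $\kappa=K^{*}e^{r(T-t^{*})}$. Clearing exponentials turns this into
\[
e^{r(T-t^{*})}p_{d}S_{d}^{k}=\kappa\big(p_{d}-\mathbb{P}(S_{T}<\kappa)\big)+\int_{[0,\kappa)}x\,\mu(dx),
\]
and the right-hand side is precisely $p_{d}\,\mathbb{E}^{\mu_{1}}(X_{1})$ read off from (\ref{mu1}); hence $\mathbb{E}^{\mu_{1}}(e^{-r(T-t^{*})}X_{1})=S_{d}^{k}=S_{d}$. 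Because the jump at $t^{*}$ is mean-preserving one has $\mathbb{E}^{\mu}(X)=e^{r(T-t_{old}^{*})}S_{0}=e^{r(T-t^{*})}(p_{d}S_{d}+p_{u}S_{u})$, and since by (a) the barycentres aggregate to $\mathbb{E}^{\mu}(X)$, the identity $\mathbb{E}^{\mu_{2}}(e^{-r(T-t^{*})}X_{2})=S_{u}$ follows automatically, completing (c). Item (d) is then immediate from Lemma~\ref{blr}, and (e) from (a) as above.

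The main obstacle is precisely (b)--(c): everything boils down to identifying $e^{rt_{c}^{*}}s_{k}^{A}$ with $\mu\big([0,\kappa]\big)$ (up to the ambiguity at the atom $\kappa$) and the intercept $S_{d}^{k}$ with the correspondingly truncated barycentre of $\mu$. Both come from the single fact that, at the critical time, $f_{k}$ is a supporting line of the upper bound $\overline{A}(\cdot,t_{c}^{*})$ at $K^{*}$; the work is in translating this tangency into statements about $\mu$ through the derivative / Legendre--Fenchel dictionary for $E^{\mu}$, being slightly careful about strict versus non-strict inequalities at the atom and about which linear piece of $\tilde{A}$ is active at a kink.
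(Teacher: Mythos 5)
Your proof is correct, and it is more careful than the paper's in one important place. For the bounds $\mathbb{P}(S_T<\kappa)\le p_d\le\mathbb{P}(S_T\le\kappa)$ the two arguments are essentially the same fact seen from two angles: the paper compares one-sided derivatives of $\tilde{A}$ and $\overline{A}(\cdot,t_c^*)$ at $K^*$ using nearby points $\hat{K}_1,\hat{K}_2$, while you package this as ``$f_k$ is a supporting line of the convex function $\overline{A}(\cdot,t_c^*)$ at $K^*$'' and read off the subgradient inclusion $\overline{A}'(K^*-,t_c^*)\le s_k^A\le\overline{A}'(K^*+,t_c^*)$; both then translate to the probability bounds via $E'$. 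The genuine divergence is in the barycentre identities. The paper writes $\mathbb{E}^{\mu}(X)=(p_dS_d+p_uS_u)e^{r(T-t^*)}$ and $\mathbb{E}^{\mu}(X)=p_d\mathbb{E}^{\mu_1}(X_1)+p_u\mathbb{E}^{\mu_2}(X_2)$ and then asserts that ``equating'' these yields the two individual identities, but a single scalar equation cannot determine both barycentres; one of them must be checked directly against the explicit formula (\ref{mu1}) or (\ref{mu2}). You do exactly that, by expanding the tangency $f_k(K^*)=E^{\mu}(\kappa)$ and matching it term by term with $p_d\mathbb{E}^{\mu_1}(X_1)$ read off from (\ref{mu1}), and only then deducing the $\mu_2$ barycentre by the martingale aggregation. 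That computation is correct (it also silently uses that $\mu$ charges no mass below $K_1^E$, so $\int_{[K_1^E,\kappa)}=\int_{[0,\kappa)}$) and it is precisely the step that makes the ``equating'' legitimate; so your write-up supplies a detail the paper's proof leaves implicit. The remaining items — mixture, normalisation, the reconstruction (\ref{sef}), and invoking Lemma~\ref{blr} for $E_1,E_2$ — are handled the same way as in the paper.
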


\begin{proof}
Firstly, let us show that the mass that is placed in $K^{*}e^{r(T-t^{*})}$ for either of the two distributions $\mu_1$ and $\mu_2$ is positive. Without loss of generality we can assume that $K^{*}e^{r(T-t^{*})}$ is the $l$-th atom of $\mu$ and that the linear piece we just embedded was $f_k = s_k^A(K-S_d^k)$. If we set $\hat{K}_1 = \max\{K_{l-1}^E,K_k^A\}$, where $K_k^A = \inf\{K \geq 0: \tilde{A}(K) = s_k^A(K-S_d^k)\}$ then we have for the upper bound $\overline{A}$, which is given by $\overline{A}(K,t) = E(e^{r((T-t_{old}^{*})-t)}K)$ that 
$$\overline{A}(K^{*},t_c^{*}) = \overline{A}'(K^{*}-,t_c^{*})(K^{*}-\hat{K}_1)+\overline{A}(\hat{K}_1,t_c^{*})$$ and at the same time for the extended American $\tilde{A}$ from (\ref{exa}) that $\tilde{A}(K^{*}) = \tilde{A}'(K^{*}-)(K^{*}-\hat{K}_1)+\tilde{A}(\hat{K}_1)$. By the definition of $t_c^{*}$ we see that $\overline{A}(K^{*},t_c^{*}) = \tilde{A}(K^{*})$. Combining this with the fact that we must have $\overline{A}(\hat{K}_1,t_c^{*}) > \tilde{A}(\hat{K}_1)$ at the critical time $t_c^{*}$ we can conclude that $\tilde{A}'(K^{*}-) > \overline{A}'(K^{*}-,t_c^{*})$. Then again, we can use that $p_d = e^{rt_c^{*}}\tilde{A}'(K^{*}-)$ and that
\begin{align*}
\overline{A}'(K^{*}-,t_c^{*}) = E'(e^{r((T-t_{old}^{*})-t_c^{*})}K^{*}-)e^{r((T-t_{old}^{*})-t_c^{*})}
\end{align*}
to see that $p_d > E'(e^{r((T-t_{old}^{*})-t_c^{*})}K^{*}-) e^{r((T-t_{old}^{*}))}= \mathbb{P}(S_T < K^{*}e^{r(T-t^{*})})$. To show the other inequality we set $\hat{K}_2 = \min\{K_{l+1}^E,K_{k+1}^A\}$, where $K_{k+1}^A= \sup\{K \geq 0: \tilde{A}(K) = s_k^A(K-S_d^k)\}$, and note that $\tilde{A}'(K+) \geq \tilde{A}'(K-)$ as $\tilde{A}$ is convex. We can then argue analogously to above that $p_d \leq \mathbb{P}(S_T \leq K^{*}e^{r(T-t^{*})})$, where the inequality turns around as we have now $K^{*} \leq \hat{K}_2$.

By the martingale property of $(S_t)_{t_{old}^{*} \leq t \leq t^{*}}$ we have
\begin{align}\label{lhs} 
\mathbb{E}^{\mu}(X) = S_0 e^{r(T-t_{old}^{*})} = S_0 e^{rt_c^{*}}e^{r(T-t^{*})}=(p_d S_d + p_u S_u) e^{r(T-t^{*})}.
\end{align}
At the same time we can write
\begin{align}\label{rhs} 
\mathbb{E}^{\mu}(X) = p_d \mathbb{E}^{\mu_1}(X_1) + p_u \mathbb{E}^{\mu_2}(X_2),
\end{align}
since we clearly have $\mu=p_d \mu_1 + p_u \mu_2$. Equating now (\ref{lhs}) and (\ref{rhs}) we obtain $\mathbb{E}^{\mu_1}(X_1)=S_d e^{r(T-t^{*})}$ and
$\mathbb{E}^{\mu_2}(X_2)=S_u e^{r(T-t^{*})}$.

We can then conclude that $E^{\mu}(K) = e^{-rt_c^{*}}\left[p_d E_1^{\mu_1}(K) + p_u E_2^{\mu_2}(K)\right]$, as we know that $\mu = p_d \mu_1 + p_u \mu_2$ and that $E_1$ and $E_2$ have contract length $(T-t^{*})$.
From the last two statements and Lemma \ref{blr} it follows now directly that $E_1$ and $E_2$ satisfy the conditions of Lemma \ref{coe}.
\end{proof}

\subsubsection{Splitting of the American put option prices $A$}

In the case of the European Put option prices $E$ the existence of a 1-1 correspondence between $E$ and $\mu$ allows us to split the function $E$ by dividing $\mu$. For the American put option prices $A$ this 1-1 correspondence to the marginal distribution at a fixed deterministic time does not exist, since the time when it is optimal to exercise the option depends on the path of the underlying. We therefore need a different method to split $A$ that still allows us to recover the original function $A$ from the two new functions $A_1$ and $A_2$. The idea behind the specific choice of split in (\ref{ams}) is that we want to separate the already embedded immediate exercise from the continuation value in each step.

\begin{prop}\label{asl} 
Assume the given functions $A$ and $E$ satisfy the necessary conditions of Theorem~\ref{nco} and Lemma~\ref{coe}, where $A$ is extended to $\tilde{A}$ as in (\ref{exa}) and the European Put price function $E$ with contract length $T-t_{old}^{*}$ is given by the marginal distribution $\mu= p_1 \delta_{K^E_1} + ... + p_n \delta_{K^E_n}$ with mean $\mathbb{E}^{\mu}(X)=e^{r(T-t_{old}^{*})}S_0$ at maturity $T$. Suppose further that the time of the next jump $t^{*}$, the critical time $t_c^{*}$ and the associated critical strike $K^{*}$ were determined as in Section~\ref{ecct} at which point the linear piece $s_k^A(K-S_d^k)$ of $\tilde{A}$ is embedded by jumping the mass $p_d$ to $S_d$ and $p_u$ to $S_u$, where $p_d$, $S_d$, $p_u$ and $S_u$ are given in Section~\ref{alg}. For the time between the jumps the underlying price process is set to be $S_t = \mathbb{E}^{\mu}(X)e^{-r(T-t)}$, where $t_{old}^{*} < t < t^{*}$.

Then the function $A$ can be split into
\begin{align}
A_1(K) = e^{rt_c^{*}}p_d^{-1} \max \{0,f_1,f_2,...,f_k \}
\end{align}
and
\begin{align}\label{a2t} 
A_2(K) = e^{rt_c^{*}}p_u^{-1} \left[ \max \{f_k,f_{k+1},...,f_{N_{\tilde{A}}},e^{-rt_c^{*}}K-S_0 \} - f_k \right],
\end{align}
where $f_i=s_i^A(K-S_d^i)$, $i=1,...,N_{\tilde{A}}$, are the given piecewise linear functions of $\tilde{A}$ and
\begin{align}\label{ams} 
A(K) = \max \{K-S_0, e^{-rt_c^{*}}(p_d A_1(K) + p_u A_2(K)) \}.
\end{align}
The functions $A_1$ and $E_1$ as well as the functions $A_2$ and $E_2$ will then satisfy the necessary conditions of Theorem~\ref{nco} again.
\end{prop}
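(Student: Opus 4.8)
\emph{Plan.} The proposition has three parts --- the recovery identity (\ref{ams}), and the claim that $(A_1,E_1)$ and $(A_2,E_2)$ satisfy Theorem~\ref{nco} --- and everything rests on two elementary scaling identities which I would record first. With $c_1:=e^{rt_c^*}/p_d$, $c_2:=e^{rt_c^*}/p_u$ and $g(K):=e^{-rt_c^*}K-S_0$, the relations $p_d=e^{rt_c^*}s_k^A$, $S_d=S_d^k$ give $c_1f_k(K)=K-S_d$, and the barycentre relation $p_dS_d+p_uS_u=S_0e^{rt_c^*}$ from Section~\ref{alg} gives $c_2\bigl(g(K)-f_k(K)\bigr)=K-S_u$. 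Hence $A_1=c_1\max\{0,f_1,\dots,f_k\}=:c_1L$ has $K-S_d$ as its largest linear piece, and $A_2=c_2\bigl(R-f_k\bigr)$ with $R:=\max\{f_k,\dots,f_{N_{\tilde A}},g\}$ has $K-S_u$ as its largest linear piece. Monotonicity and convexity of $A_1,A_2$ are then immediate, as are the lower bounds $A_1\ge K-S_d$ and $A_2\ge K-S_u$ from part {\it (iii)} of Theorem~\ref{nco}.

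For (\ref{ams}) I would compute $e^{-rt_c^*}\bigl(p_dA_1(K)+p_uA_2(K)\bigr)=L(K)+R(K)-f_k(K)$. Since $f_k$ occurs in each of the maxima defining $L$ and $R$ we have $L,R\ge f_k$, so $L+R-f_k=\max\{L,R\}+\min\{L,R\}-f_k$; and $\max\{L,R\}=\max\{0,f_1,\dots,f_{N_{\tilde A}},g\}=\max\{\tilde A,g\}$, whence, using $g\le K-S_0$ and $\max\{\tilde A,K-S_0\}=A$ (from Lemma~\ref{eafc} and the definition of $K_m^A$), the identity (\ref{ams}) is equivalent to $\min\{L,R\}\equiv f_k$. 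The latter holds because both sides are $\ge f_k$, $L(K)=f_k(K)$ for $K\ge K_k^A$ (the slopes $0<s_1^A<\dots<s_k^A$ are increasing), $R(K)=f_k(K)$ for $K\le\min\{K_{k+1}^A,S_u\}$ (the slopes $s_k^A<\dots<s_{N_{\tilde A}}^A$ are increasing and $f_k=g$ exactly at $K=S_u$), and these two ranges cover $[0,\infty)$ since $K_{k+1}^A\ge K_k^A$ and $S_u\ge K^*>K_k^A$ --- here $K^*>K_k^A$ is the statement that $f_k$ is the piece of $\tilde A$ active just left of the critical strike (Remark~\ref{roct}(ii)), and $K^*\le S_u$ follows because $\mu_2$ is supported on $[K^*e^{r(T-t^*)},K_n^E]$ with barycentre $S_ue^{r(T-t^*)}$ (Proposition~\ref{eus}).

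For Theorem~\ref{nco} applied to $(A_1,E_1)$ and $(A_2,E_2)$ I would use: (a) the decomposition $E=e^{-rt_c^*}(p_dE_1+p_uE_2)$ of Proposition~\ref{eus}, together with $E_2\equiv0$ on $[0,K^*e^{r(T-t^*)}]$ and $E_1$ affine on $[K^*e^{r(T-t^*)},\infty)$ (from the supports of $\mu_1,\mu_2$); (b) $\tilde A\ge E$, $\tilde A\le\overline A(\cdot,0)$, and condition {\it (ii)} for $\tilde A$, all from Lemma~\ref{eafc}; (c) $\overline A(\cdot,t_c^*)\ge\tilde A$ with equality at $K^*$, and $\overline A(K,t_c^*)=E(e^{r(T-t^*)}K)\ge g(K)$ by Jensen; and (d) $\tilde A=L$ on $[0,K_{k+1}^A]$, $\tilde A=\max\{f_k,\dots,f_{N_{\tilde A}}\}$ on $[K_k^A,\infty)$. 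With these, the upper bound {\it (iv)} for $A_1$ splits at $K^*$ (for $K<K^*$, $A_1=c_1L\le c_1\tilde A\le c_1\overline A(\cdot,t_c^*)=c_1E(e^{r(T-t^*)}\cdot)=E_1(e^{r(T-t^*)}\cdot)$; for $K\ge K^*\ge K_k^A$ both sides equal $K-S_d$), the upper bound {\it (iv)} for $A_2$ is the direct identity $E_2(e^{r(T-t^*)}K)-A_2(K)=c_2\bigl(\overline A(K,t_c^*)-R(K)\bigr)\ge0$, the Legendre--Fenchel condition {\it (ii)} reduces to that for $\tilde A$ on the ``proportional'' regime and to an equality ($=S_d$, resp. $=S_u$) beyond it, and $A_1\ge E_1$ follows from $L\ge E$ on $[0,K^*e^{r(T-t^*)}]$ (which is $\tilde A\ge E$ on $[0,K_{k+1}^A]$, extended by a one-line concavity argument for $f_k-E$ on the remaining interval) and from $A_1\ge K-S_d\ge E_1$ beyond.

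The one genuinely delicate step --- the main obstacle --- is the lower bound $A_2\ge E_2$. It is trivial below $K^*e^{r(T-t^*)}$ (where $E_2\equiv0$) and above the largest atom $K_n^E$ of $\mu$ (where $E_2$ is affine with slope $e^{-r(T-t^*)}<1$, so $E_2\le K-S_u\le A_2$); on the remaining bounded interval the decomposition of Proposition~\ref{eus} converts the claim into $\max\{\tilde A(K),g(K)\}\ge E(K)+s_k^A\bigl(1-e^{-r(T-t^*)}\bigr)K$, an inequality which holds with equality at the left endpoint $K^*e^{r(T-t^*)}$ and whose propagation to the right must be controlled by combining condition {\it (ii)} for $\tilde A$ with the characterisation of the critical strike in Proposition~\ref{cta} (namely that $K^*e^{r(T-t^*)}$ is the first European kink at which the Legendre--Fenchel inequality between $f_k$ and $E$ becomes tight). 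All other checks are routine bookkeeping with piecewise-linear convex functions, provided one keeps track of the fact that the American prices live in ``time-$t^*$'' strike units (where the split is at $K^*$) while the European prices live in ``time-$T$'' strike units (where the split is at $K^*e^{r(T-t^*)}$).
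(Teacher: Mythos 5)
The proposal correctly handles the recovery identity (\ref{ams}), and your $\min\{L,R\}\equiv f_k$ device is an attractive compact way of getting there, but the proposal is incomplete in its weakest place, and I believe you have mislocated where the work actually goes.

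You flag the lower bound $A_2\ge E_2$ as ``the one genuinely delicate step,'' reduce it to the inequality $\max\{\tilde A,g\}\ge E+s_k^A\bigl(1-e^{-r(T-t^*)}\bigr)K$ on the middle range, and leave the ``propagation to the right'' unproved, gesturing at Proposition~\ref{cta}. This is not a proof of the lower bound, and I do not think the route you sketch closes cleanly: for instance, your assertion that the transformed inequality ``holds with equality at the left endpoint $K^*e^{r(T-t^*)}$'' amounts to $A_2(K^*e^{r(T-t^*)})=0$, i.e.\ $R=f_k$ at $K^*e^{r(T-t^*)}$, which requires $K^*e^{r(T-t^*)}\le\min\{K_{k+1}^A,S_u\}$; but the facts you establish ($K^*\le S_u$, $K^*\le K_{k+1}^A$) are in time-$t^*$ units and do not give this. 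So the starting point of your propagation argument is itself unjustified.

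The paper dispatches $A_2\ge E_2$ with much less effort by sequencing the verifications differently. It first proves the Legendre--Fenchel condition $A_2'(K+)K-A_2(K)\ge E_2'(K+)K-E_2(K)$ for all $K$ (trivial where $E_2\equiv0$; by Lemma~\ref{bacs} where the extension pieces are active; and on the middle range by substituting $A_2=p_u^{-1}(e^{rt_c^*}A-p_dA_1)$, $E_2=p_u^{-1}(e^{rt_c^*}E-p_dE_1)$ and cancelling the $(A_1,E_1)$ contributions because both sides equal $S_d$ there). Once this is in hand, the lower bound is an immediate induction over the atoms of $E_2$: at $K^*e^{r(T-t^*)}$ one has $E_2=0\le A_2$ (inequality, not equality, is all that is needed), and the Legendre--Fenchel condition together with $A_2\ge E_2$ at an atom forces $A_2'(K+)\ge E_2'(K+)$ there, so the inequality carries across each interval on which $E_2$ is affine. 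This is precisely the ``condition~{\it(ii)} plus base case'' mechanism you already invoke elsewhere in the proposal; the gap is simply that you did not apply it to $(A_2,E_2)$, and instead attempted a direct argument that stalls. Relatedly, your treatment of the Legendre--Fenchel condition for $(A_2,E_2)$ (``reduces to that for $\tilde A$ on the proportional regime and to an equality beyond it'') is too terse to be checked and elides the three-case split that the paper needs (below $\min\{K^*e^{r(T-t^*)},K_m^A\}$, the interval up to $K_m^A$, and the extension range handled by Lemma~\ref{bacs}). In short: the recovery identity and the upper bounds are fine, and several of your simplifications are genuinely nicer than the paper's, but the proof of the lower bound $A_2\ge E_2$ is missing, and the fix is to establish the Legendre--Fenchel condition for $(A_2,E_2)$ carefully and \emph{then} deduce the lower bound from it rather than attacking the lower bound head-on.
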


\begin{proof}
To see that (\ref{ams}) is satisfied we note that for $0 \leq K \leq K^{*}$ we have $A_2(K) = 0$ and therefore by the definition of $A$ we have $A(K) = \max \{K-S_0,e^{-rt_c^{*}}p_d A_1(K)\}$ in that interval. For $K \geq K^{*}$ we have $A_1(K) = e^{rt_c^{*}}p_d^{-1} f_k(K)$ and $A_2(K) = e^{rt_c^{*}}p_u^{-1}((\tilde{A}(K) \vee (e^{-rt_c^{*}}K-S_0))-f_k(K))$ and therefore
\begin{align}
A(K) &= \max \{K-S_0,\tilde{A} \vee (e^{-rt_c^{*}}K-S_0) \}\\
	 &= \max \{K-S_0, e^{-rt_c^{*}}(p_d A_1(K) + p_u A_2(K)) \},\notag
\end{align}
which holds true by the definition of $A$ and $\tilde{A}$ and the fact that $K-S_0$ dominates $e^{-rt_c^{*}}K-S_0$ for all $K \geq 0$.

We then have to check that the necessary conditions from Theorem~\ref{nco} are satisfied in the left hand-side picture $P_1$, where our new American is now $A_1$ and the new European is $E_1$. 
To see that $A_1$ has to be increasing, we can argue that the linear extension of an increasing function is again increasing and the multi constant does not change that. Also we have that $A_1$ has to be a convex function as it is the maximum over linear functions multiplied by a positive constant.

Let us now show that $A_1$ and $E_1$ satisfy
\begin{align}\label{ncd} 
A'_1(K+)K - A_1(K) \geq E'_1(K+)K - E_1(K)
\end{align}
for all $K \geq 0$. In the case where $K \leq K^{*}$ we have $A_1(K) = e^{rt_c^{*}}p_d^{-1}A(K)$ and $E_1(K) = e^{rt_c^{*}}p_d^{-1}E(K)$. Since the original functions $A$ and $E$ satisfy this condition and $e^{rt_c^{*}}p_d^{-1} > 0$ the functions $A_1$ and $E_1$ inherit this property. 

Next we show that the condition (\ref{ncd}) also holds for $K \geq K^{*}$. From Lemma~\ref{app1} in the appendix we know that it is enough to check the condition for the atoms of $E_1$. Starting out with the last atom $K^{*}e^{r(T-t^{*})}$ of $E_1$ we have
\begin{align*}
A'_1(K^{*}e^{r(T-t^{*})}+)K^{*}e^{r(T-t^{*})} - A_1(K^{*}e^{r(T-t^{*})}) &= A'_1(K^{*}+)K^{*} - A_1(K^{*})\\
&= K^{*} - A_1(K^{*}),
\end{align*}
where the fact that $A_1$ is linearly extended beyond $K^{*}$ gives the first equality and the fact that $A'_1(K^{*}+) = e^{rt_c^{*}}p_d^{-1} f'_k =1$ gives the second equality. Then again, since $A_1(K^{*}) = E_1(K^{*}e^{r(T-t^{*})})$ and $E'_1(K^{*}e^{r(T-t^{*})}+) = e^{-r(T-t^{*})}$ we see that
\begin{align*}
K^{*} - A_1(K^{*}) = E'_1(K^{*}e^{r(T-t^{*})}+)K^{*}e^{r(T-t^{*})} - E_1(K^{*}e^{r(T-t^{*})}).
\end{align*}
This shows that the condition is fulfilled for $K=K^{*}e^{r(T-t^{*})}$, but since $E_1$ is a convex function it follows that $E'_1(K+)K - E_1(K)$ is increasing in $K$, which readily implies for $K^{*} \leq K \leq K^{*}e^{r(T-t^{*})}$
\begin{align*}
E'_1(K+)K - E_1(K) 	&\leq E'_1(K^{*}e^{r(T-t^{*})}+)K^{*}e^{r(T-t^{*})} - E_1(K^{*}e^{r(T-t^{*})}) \\
					&\leq A'_1(K^{*}e^{r(T-t^{*})})+)K^{*}e^{r(T-t^{*})} - A_1(K^{*}e^{r(T-t^{*})}) \\
					&=A'_1(K+)K - A_1(K),
\end{align*}
where the last equality is due to the fact that $A_1$ is linear beyond $K^{*}$. Hence we must have  $A'_1(K+)K-A_1(K) \geq E'_1(K+)K-E_1(K)$ for all $K \geq 0$.

To see that $E_1$ is a lower bound on $A_1$ we use that $E_1(K)=e^{rt_c^{*}}p_d^{-1}E(K)$ and $A_1(K) = e^{rt_c^{*}}p_d^{-1} A(K)$ for $0 \leq K \leq K^{*}$. As $e^{rt_c^{*}}p_d^{-1}$ is positive and we have $A(K) \geq E(K)$ in the original picture we obtain $A_1(K) \geq E_1(K)$ for $0 \leq K \leq K^{*}$. For $K\geq K^{*}$ we know already that the condition $A'_1(K+)K-A_1(K) \geq E'_1(K+)K-E_1(K)$ has to hold. Combined with the fact that $A_1(K^{*}) \geq E_1(K^{*})$ we obtain that $A_1'(K^{*}+) \geq E_1'(K^{*}+)$, which then implies that we must have $A_1(K) \geq E_1(K)$ for as long as the slope of $E_1$ does not change. By induction on the atoms of $\mu_1$ to the right of $K^{*}$ we obtain that $E_1$ is a lower bound on $A_1$ for all strikes $K \geq 0$.

To show that $\overline{A}_1(K,t_c^{*}) = E_1(e^{r(T-t^{*})}K)$ is an upper bound on $A_1$ we distinguish the two cases $0 \leq K \leq K^{*}$ and $K \geq K^{*}$. In the first case we can use again that $E_1(K)=e^{rt_c^{*}}p_d^{-1}E(K)$ and that $A_1(K) = e^{rt_c^{*}}p_d^{-1} A(K)$, which then only has to be combined with $E(e^{r(T-t^{*})}K) \geq A(K)$ to obtain the result. The second case follows using the definition of the time $t^{*}$, where we have $E(e^{r(T-t^{*})}K) \geq A(K)$ and $E(e^{r(T-t^{*})}K^*) = A(K^*)$ implying $E_1(e^{r(T-t^{*})}K^*) = A_1(K^*)$. Since the last atom of $\mu_1$ is $K^{*}e^{r(T-t^{*})}$ we can conclude that $E'_1(K+) = e^{-r(T-t^{*})}$ for any $K \geq K^{*}e^{r(T-t^{*})}$. Then again $\overline{A}'_1(K,t_c^{*})  = E_1'(e^{r(T-t^{*})} K) e^{r(T-t^{*})}$, which is $1$ and therefore coincides with $A'_1(K)$ for $K \geq K^{*}$. Hence we showed that $\overline{A}_1(K,t_c^{*}) \geq A_1(K)$ for all strikes $K \geq 0$.

To be able to split the initial picture into the two subpictures $P_1$ and $P_2$ we have to show that the necessary conditions from Theorem~\ref{nco} also hold in the right hand-side picture $P_2$. To see that $A_2$ is an increasing function we note that $\max\{f_{k},...,f_{m-1}\}$ is increasing, since each linear piece $f_i$ with $i=1,...,m-1$ is increasing. Subtracting $f_k$, does not affect the monotonicity since we have $f'_k \leq f'_i$ for all $i=k,...,m-1$ as they are ordered by appearance. To obtain $A_2$ we only have to consider $e^{rt_c^{*}}p_u^{-1} \max\{\max\{0, f_{k+1}-f_k,...,f_{m+n-p}-f_k\},e^{-rt_c^{*}}K-S_0-f_k\}$, which is again increasing as the maximum over increasing functions. Further it follows immediately that $A_2$ has to be convex, since it is the maximum over linear functions multiplied by the positive constant $e^{rt_c^{*}}p_u^{-1}$. It only remains to show that the condition
\begin{align}\label{nc2} 
A'_2(K+)K - A_2(K) \geq E'_2(K+)K - E_2(K)
\end{align}
holds for all $K \geq 0$. For $0 \leq K \leq \min\{K^{*}e^{r(T-t^{*})},K_m^A\}$, where $K_m^A = \inf\{K\geq 0: A(K) = K-S_0\}$, the condition is trivially fulfilled, since the left hand-side is non-negative by the monotonicity and convexity of $A_2$ and $E_2$ is constantly $0$ there. Lemma~\ref{bacs} shows that the condition is also fulfilled for $K\geq K_m^A$ as $\hat{A}$ is the extension of $A_2$  from (\ref{exa}), which leaves the case $\min\{K^{*}e^{r(T-t^{*})},K_m^A\} < K \leq K_m^A$.
For $\min\{K^{*}e^{r(T-t^{*})},K_m^A\} < K \leq K_m^A$ we can write $A_2(K) = p_u^{-1} (e^{rt_c^{*}}A(K)-p_d A_1(K))$ and $E_2(K) = p_u^{-1} (e^{rt_c^{*}}E(K) - p_d E_1(K))$. Hence the condition (\ref{nc2}) simplifies to
\begin{multline}
e^{rt_c^{*}}(A'(K+)K-A(K)) - p_d(A'_1(K+)K-A_1(K)) \\
\geq e^{rt_c^{*}} (E'(K+)K-E(K)) - p_d (E'_1(K+)K-E_1(K)).
\end{multline}
Then again we know from the necessary conditions on $A$ and $E$ that $A'(K+)K - A(K) \geq E'(K+)K - E(K)$. Combining this with the fact that for $K \geq K^{*}e^{r(T-t^{*})}$ we have $A_1(K) = K-S_d$ and $E_1(K) = e^{-r(T-t_c^{*})}K-S_d$ we obtain $A'_1(K+)K - A_1(K) = E'_1(K+)K - E_1(K)=S_d$. Therefore the condition has to hold for all strikes $K\geq 0$.

We still have to show that $E_2$ is a lower bound on $A_2$. Consider first the case $0 \leq K \leq e^{r(T-t^{*})}K^{*}$, where we know that  $E_2(K) = 0$ as the support of $\mu_2$ begins in $e^{r(T-t^{*})}K^{*}$. Since $A_2$ is given as the maximum over finitely many linear functions and $0$ we can immediately conclude that we must have $A_2(K) \geq E_2(K)$ for all strikes $0 \leq K \leq e^{r(T-t^{*})}K^{*}$. In the case where $K \geq e^{r(T-t^{*})}K^{*}$ we know already that $A_2(e^{r(T-t^{*})}K^{*}) \geq E_2(e^{r(T-t^{*})}K^{*})$ and since we showed that $A'_2(K+)K-A_2(K) \geq E'_2(K+)K-E_2(K)$ has to hold for all $K\geq 0$ we can conclude that $A'_2(e^{r(T-t^{*})}K^{*}+) \geq E'_2(e^{r(T-t^{*})}K^{*}+)$. Hence we have $A_2(K) \geq E_2(K)$ for all strikes where the right hand-side derivative of $E_2$ remains unchanged. This way we can show by induction on the atoms of $E_2$ that we must have $A_2(K) \geq E_2(K)$ for all strikes $K \geq 0$.

Finally we are left with showing that $\overline{A}_2$, given by $\overline{A}_2(K) = E_2(e^{r(T-t^{*})}K)$ is an upper bound on $A_2$. As this is trivially fulfilled for $K < K^{*}e^{r(T-t^{*})}$ it is enough to consider $K \geq K^{*}e^{r(T-t^{*})}$. To this end we note that we must have $$\overline{A}_2(K,0) =  p_u^{-1}(e^{rt_c^{*}}\overline{A}(K,t_c^{*}) - p_d \overline{A}_1(K,0))$$ by the definition of $\overline{A}_2$ and the representation of $E$ by $E_1$ and $E_2$ in Proposition~\ref{eus}. We can then rewrite $\overline{A}_2 \geq A_2$ as
\begin{multline}\label{ubin} 
p_u^{-1}(e^{rt_c^{*}}\overline{A}(K,t_c^{*}) - p_d \overline{A}_1(K,0)) \geq \\
e^{rt_c^{*}}p_u^{-1}(\max\{f_k,f_{k+1},...,f_{N_{\tilde{A}}},e^{-rt_c^{*}}K-S_0\} - f_k).
\end{multline}
We can now use the fact that $\overline{A}_1(K) = E_1(e^{r(T-t^{*})}K)$ and since we have $K \geq K^{*}e^{r(T-t^{*})}$ we obtain further that $p_d \overline{A}_1(K) = K-S_d$, which equals exactly $e^{rt_c^{*}}f_k$. Hence the inequality in (\ref{ubin}) reduces to $$\overline{A}(K,t_c^{*}) \geq \max\{f_k,f_{k+1},...,f_{m+n-p},e^{-rt_c^{*}}K-S_0\}$$ or equivalently $\overline{A}(K,t_c^{*}) \geq \max\{\tilde{A},e^{-rt_c^{*}}K-S_0\}$, which has to hold as we know that $\overline{A}$ is an upper bound on $\tilde{A}$ with
\begin{align*}
\overline{A}(K,t_c^{*}) & = E(e^{r(T-t^{*})}K)\\
					& = \overline{A}(Ke^{-rt_c^{*}},0)\\
					& \geq Ke^{-rt_c^{*}} - S_0.
\end{align*}
where the last inequality is because $\overline{A}$ is initially an upper bound on $A$. Hence $\overline{A}_2$ is an upper bound on $A_2$ for all strikes $K \geq 0$.
\end{proof}

This result shows that the initial picture can be divided into the two subpictures $P_1$ and $P_2$, where each of these pictures satisfies again the necessary conditions of Theorem~\ref{nco}. Note that the splitting of the function $\tilde{A}$ as in (\ref{ams}) can be interpreted as separating the immediate exercise from the continuation value. The additional term $(e^{-rt_c^{*}}K-S_0)-f_k$ in $A_2$ represents the immediate payoff in $S_u$ at time $t^{*}$, since
\begin{align*}
e^{-rt_c^{*}}K-S_0-f_k &= e^{-rt_c^{*}}K-S_0- e^{-rt_c^{*}}p_d(K-S_d) \\
 &= e^{-rt_c^{*}}p_u(K-S_u),
\end{align*}
where the last equality is obtained by using the definition of $S_u$.

\subsection{Convergence of the Algorithm}

After having defined the splitting procedure we are now able to state the following proposition, which will then allow us to argue that the embedding algorithm only needs a finite number of steps to produce an admissible price process that has $A$ and $E$ as its American and European Put option prices respectively.

\begin{prop}\label{eeam} 
Assume the functions $A$ and $E$ are given by (\ref{gae}) and satisfy the necessary conditions of Theorem~\ref{nco} and Lemma~\ref{coe}, where $A$ is extended to $\tilde{A}$ as in (\ref{exa}) and the European Put price function $E$ is given by the marginal distribution $\mu= p_1 \delta_{K^E_1} + ... + p_n \delta_{K^E_n}$ at maturity $T$. Suppose that of the $N_{\tilde{A}}$ linear pieces of $\tilde{A}$ the linear pieces added to the American Put price function $A$ by (\ref{exa}) are given by $f_i$, $i=m,...,N_{\tilde{A}}$, then $f_i$, $i=m,...,N_{\tilde{A}}$ are all embedded together at maturity $T$.
\end{prop}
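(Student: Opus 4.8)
Here is how I would go about proving Proposition~\ref{eeam}.

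The plan is to feed the pieces $f_m,\dots,f_{N_{\tilde A}}$ into the critical-time formula of Proposition~\ref{cta} and show that their critical time can never be strictly less than the maturity, so that they are resolved not by any intermediate jump of $S$ but by the single step in which the European law $\mu$ is embedded at $T$; since that final embedding realises the immediate-exercise payoffs coded by every one of $f_m,\dots,f_{N_{\tilde A}}$ at once, this is exactly the assertion that they are ``embedded together at maturity $T$''.

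First I would exploit the defining feature of an extension piece. By construction~(\ref{exa}) each $f_j$ with $j\ge m$ coincides with $\tilde A$ on exactly one kink-interval $[K_\iota^E,K_{\iota+1}^E]$ of $E$ (with $\iota\ge p$), and its slope was fixed so that the Legendre--Fenchel relation $\tilde A'(K+)K-\tilde A(K)=E'(K+)K-E(K)$ holds there; writing $f_j(K)=s_j^AK-d$ and evaluating at the left endpoint gives $d=d_\iota$ in the notation of Proposition~\ref{cta}. Because the numbers $d_i=E'(K_i^E+)K_i^E-E(K_i^E)$ are strictly increasing in $i$ (convexity of $E$ with a genuine kink at each atom), the index associated with $f_j$ is $i^{*}=\min\{i:d<d_i\}=\iota+1$, and, using $f_j(K_{\iota+1}^E)=\tilde A(K_{\iota+1}^E)$ together with $s_{\iota+1}^EK_{\iota+1}^E-d_{\iota+1}=E(K_{\iota+1}^E)$, the formula of that proposition collapses to
\[
\inf_i t^{i,j}=(T-t_{old}^{*})+\frac1r\ln\!\Big(\frac{E(K_{\iota+1}^E)+d_\iota}{\tilde A(K_{\iota+1}^E)+d_\iota}\Big).
\]
Since $\tilde A\ge E$ (Lemma~\ref{eafc}) and $d_\iota\ge 0$, the critical time of $f_j$ is at most $T-t_{old}^{*}$, with equality precisely when $\tilde A(K_{\iota+1}^E)=E(K_{\iota+1}^E)$; and by the recursion $(\tilde A-E)(K_{i+1}^E)=(\tilde A-E)(K_i^E)\,K_{i+1}^E/K_i^E$, valid for $i\ge p$, this holds for all extension atoms as soon as it holds at the first one, $K_p^E$. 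The last extension piece, living on $[K_{N_E}^E,\infty)$, is even simpler: there $d=d_{N_E}=S_0$, which exceeds no $d_i$, so the set in~(\ref{crt}) is empty and it carries no finite critical-time candidate at all.

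The remaining and principal work is to show that whenever the algorithm actually comes to embed one of $f_m,\dots,f_{N_{\tilde A}}$ the equality $\tilde A=E$ does hold on the relevant range, so that the displayed critical time is genuinely $T-t_{old}^{*}$ rather than something smaller. The natural route is to track the splitting of Propositions~\ref{eus} and~\ref{asl}: each genuine piece $f_1,\dots,f_{m-1}$ that is peeled off strictly reduces the number of non-extension pieces in the two sub-pictures, so after finitely many steps every sub-picture still carrying extension pieces has become degenerate, $A=E\vee(K-S_0)_{+}$; in that degenerate situation $\tilde A$ merely continues the relevant affine piece of $E$ up to the next kink, so the gap at $K_p^E$, and hence at every extension atom, is $0$, and step~6/7 then embeds the European directly at $T$. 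Verifying that the residual upper bound $\overline A$ has descended to $E$ exactly at the atoms $K_{\iota+1}^E$ throughout this reduction --- equivalently, that the Legendre--Fenchel equality is inherited by $A_1,E_1$ and $A_2,E_2$ in a way that keeps the extension pieces pinned against $E$ --- is the heaviest and most delicate bookkeeping, and is the main obstacle to a fully rigorous argument; one should also keep in mind that $N_{\tilde A}$ and the list of extension pieces are regenerated at each recursive call, so the statement is to be read per invocation and the induction underlying the convergence result organised accordingly.
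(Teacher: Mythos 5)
Your plan has a genuine gap, and you flag it yourself at the end, but it is larger than ``heavy bookkeeping'': it is the central claim the proposition rests on. The computation via Proposition~\ref{cta} is correct as far as it goes --- for an extension piece $f_j$ supported on $[K_\iota^E,K_{\iota+1}^E]$ you correctly get $d=d_\iota$, $i^{*}=\iota+1$, and the critical-time candidate
\[
\inf_i t^{i,j}=(T-t_{old}^{*})+\tfrac{1}{r}\ln\!\Big(\tfrac{E(K_{\iota+1}^E)+d_\iota}{\tilde A(K_{\iota+1}^E)+d_\iota}\Big),
\]
and the recursion $(\tilde A-E)(K_{i+1}^E)=(\tilde A-E)(K_i^E)K_{i+1}^E/K_i^E$ is a nice observation. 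But this shows the extension pieces' critical times are \emph{at most} $T-t_{old}^{*}$, not that they are never realised before maturity. The algorithm embeds whichever piece has the smallest candidate time, and nothing in your argument rules out an extension piece having a candidate time strictly below that of all remaining genuine pieces, in which case it would be embedded early. Your proposed fix --- that by the time the algorithm turns to an extension piece the gap $\tilde A-E$ has vanished --- is circular: it presupposes exactly what has to be shown, namely that genuine pieces are always dealt with first. (Your opening sentence, that the critical time ``can never be strictly less than the maturity'', is in fact contradicted by your own formula whenever $\tilde A>E$.)

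The paper supplies precisely the missing ordering, and by a route you bypass. It observes that because $f_i$ and the relevant linear piece of $\overline A$ share the same intercept $-d_\iota$ by construction, the kinks at $K_\iota^E$ and $K_{\iota+1}^E$ hit $f_i$ at the \emph{same} time, so the extension pieces have increasing critical times in $i$; and by convexity the kink responsible for $f_m$'s critical time hits the less-steep genuine piece $f_{m-1}$ strictly earlier, so $t_c^{*}(f_{m-1})<t_c^{*}(f_m)\le t_c^{*}(f_j)$ for all $j\ge m$. Hence the minimum is always attained at a genuine piece. The recursive step then relies on Lemma~\ref{bacs} to show that after a split the extension of $A_2$ is exactly the transform of $\tilde A$, so the same ordering argument applies in the subpicture. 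Your closing induction (``each genuine piece peeled off reduces the count \dots'') is the same skeleton the paper uses, but it is vacuous without first establishing the ordering; that is the step you would still need to supply.
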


\begin{proof}
Let us assume, without loss of generality, that for the rightmost linear piece $f_{m-1}$ of the original American $A$ there exists at least one strike $K \in [K_{m-1}^A,K_m^A]$ for which $f'_{m-1}(K+)K-f_{m-1}(K) > E'(K+)K-E(K)$, otherwise consider the first linear piece of $A$ to the left of $f_{m-1}$ where this condition is satisfied with respect to the correct interval. This assumption ensures that $f_{m-1}$ is not embedded together with the pieces $f_i$, $i=m,...,m+n-p$ at maturity. By the definition of $\tilde{A}$ we have for $i=m,...,m+n-p-1$ that
\begin{align*}
f_i'(K_i^E+)K_i^E-f_i(K_i^E) = E'(K_i^E+)K_i^E-E(K_i^E),
\end{align*}
but
\begin{align*}
f_i'(K_{i+1}^E+)K_{i+1}^E-f_{i}(K_{i+1}^E) < E'(K_{i+1}^E+)K_{i+1}^E-E(K_{i+1}^E),
\end{align*}
where we have to consider the last linear piece of $A$ separately. Combined with Remark~\ref{roct} we can then conclude that the linear pieces $f_i$, $i=m-1,...,m+n-p-1$, of $\tilde{A}$ attain their critical time in the right-side endpoint $K_{i+1}^E$ of their respective interval, as it is the first European strike at which the Legendre-Fenchel condition does not hold anymore.

Then again we know from the definition of $\tilde{A}$ and $\overline{A}$ that the linear piece of $\overline{A}$ on which $K_i^E e^{-rT}$ and $K_{i+1}^E e^{-rT}$ lie will coincide with $f_i$ for any $i=m,...,m+n-p-1$ at its critical time $t_c^{*}$, as the two linear functions agree for the strikes $K=0$ and $K= K_{i+1}^E e^{-r(T-t_c^{*})}$. Hence the critical time attained in $K_{i+1}^E$ will coincide with the time obtained by $K_i^E$. The convexity of $\tilde{A}$ then guarantees that the linear piece $f_{i-1}$ will have a smaller critical time than $f_i$ for all $i=m,...,m+n-p-1$, as $\overline{A}(K_i^E,t)$ --- the kink in $\overline{A}$ responsible for the critical time of $f_i$ --- will hit $f_{i-1}$ before hitting $f_i$. Analogously, we obtain that the last linear piece of $A$ will be embedded the last. 

Suppose for now that we are embedding $f_{m-1}$ as first linear piece of $A$ then the American $A_2$ in $P_2$ has to coincide with the European $E_2$, as the strikes where the slopes change are $K=K_p^E,...,K_{m+n-p}^E$ for both functions and Lemma~\ref{bacs} from the appendix ensures that we have $$A'_2(K+)K-A_2(K) = E'_2(K+)K-E_2(K)$$ for all $K \geq 0$.

Finally we still have to rule out that embedding another linear piece $f_k$, $k < m-1$, first could cause us to embed $f_i$, $i > m-1$, before $f_{m-1}$. This can be achieved by using Lemma~\ref{bacs}, noting that the extension of $A_2$ is obtained by transforming $\tilde{A}$ as in (\ref{a2t}) omitting to take the maximum with $e^{-rt^{*}}K-S_0$. The extension will then again be convex by Proposition~\ref{asl} allowing us to conduct the same line of argument as above.
\end{proof}

This proposition allows us now to determine how the left and right hand-side subpictures have to look like after we embedded the last linear piece of the original function $A$ that did not coincide (partially) with a linear piece of $E$. By the definition of the algorithm in Section~\ref{alg} and Proposition~\ref{eeam} we know that only pieces of the original function $A$ are passed down to the left hand-side picture, since none of the linear pieces of the extension are embedded before maturity $T$. Hence we have $A_1 = E_1 \vee (K-S_d)_{+}$, where $E_1$ appears as a linear piece of $A$ could lie on $E$ in the original picture. 

Similarly we obtain $A_2 = E_2 \vee (K-S_u)_{+}$ for the right hand-side picture. The reason for this is that after having embedded the last linear piece of the original $A$ that did not lie on a linear piece of $E$, we are only left with the linear pieces added by step $2$ of  the algorithm in Section~\ref{alg} or a piece coinciding with a linear piece of $E$. Then again, Lemma~\ref{bacs} guarantees that all these linear pieces satisfy $A'(K+)K-A(K) = E'(K+)K-E(K)$ for any $K \geq 0$ and that the linear pieces of $A$ and $E$ change at the same strikes implying that they have to coincide. Therefore we have $A_2 = E_2 \vee (K-S_u)_{+}$.  

The following corollary to Proposition~\ref{cta} will then provide us with an upper bound on the number of steps necessary to embed the given functions $A$ and $E$.

\begin{corollary}\label{nsba} 
Suppose the given functions $A$ and $E$ satisfy the necessary conditions of Theorem~\ref{nco} and Lemma~\ref{coe} and that the number of linear pieces of $A$ is given by $N_A$, then the total number of steps necessary to embed $A$ and $E$ is bounded above by $2N_A + 1$.
\end{corollary}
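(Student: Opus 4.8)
The plan is to present the algorithm of Section~\ref{alg} as a finite rooted binary tree and to count its nodes. Each pass through steps~2--7 takes a \emph{picture}, i.e.\ a pair $(A,E)$ satisfying the conditions of Theorem~\ref{nco} and Lemma~\ref{coe}; it extends $A$ to $\tilde A$, embeds a single linear piece of $\tilde A$ by the jump of step~4, and splits the picture into $(A_1,E_1)$ and $(A_2,E_2)$, each of which is then either fed back into step~2 or terminated by embedding its European at maturity $T$ (steps~6 and~7). A \emph{step} is thus either one such jump --- an internal node, whose two children are the sub-pictures it produces --- or one embedding of a European at $T$ --- a leaf. In a finite rooted tree in which every internal node has exactly two children the number of leaves is one more than the number of internal nodes, so the total number of steps equals $2S+1$, where $S$ is the number of jumps; it therefore suffices to prove $S\le N_A$.

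To bound $S$ I would track the linear pieces of the \emph{original} function $A$ through the recursion. Write $A=\max\{0,f_1,\dots,f_{m-1},K-S_0\}$ as in~(\ref{gae}). The flat piece $0$ is carried inside $E$ and the piece $K-S_0$ is realised as immediate exercise at the current time (step~4, and the observation preceding Lemma~\ref{cct}), so neither is ever embedded by a jump. Moreover, by Proposition~\ref{eeam} --- which, via Proposition~\ref{cta} and Remark~\ref{roct}, pins down the order of embedding --- the linear pieces introduced by the extension~(\ref{exa}) are never embedded by a jump either: a picture ceases to be split exactly when its American takes the form $E\vee(K-S_{\cdot})_{+}$, and Proposition~\ref{eeam} shows that this has already happened once all the pieces $f_1,\dots,f_{m-1}$ of the current picture that are destined to be jumped have been jumped. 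Hence every jump of step~4 embeds (a renormalisation by a positive constant of) one of the pieces $f_1,\dots,f_{m-1}$ of the picture being processed.

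It remains to check that no such piece is ever embedded by a jump more than once. When $f_k$ is embedded, Proposition~\ref{asl} gives $A_1=e^{rt_c^{*}}p_d^{-1}\max\{0,f_1,\dots,f_k\}$ and $A_2=e^{rt_c^{*}}p_u^{-1}\big[\max\{f_k,\dots,f_{N_{\tilde A}},e^{-rt_c^{*}}K-S_0\}-f_k\big]$: in $A_1$ the piece $f_k$ has become the top piece of slope one, i.e.\ the new immediate-exercise line $K-S_d$ (which, by the observation preceding Lemma~\ref{cct}, is never jumped), while in $A_2$ the linear function $f_k$ has been subtracted off, so $f_k$ remains an embeddable piece of neither sub-picture; and the remaining pieces $f_1,\dots,f_{k-1}$ pass only into $P_1$ and $f_{k+1},\dots,f_{m-1}$ only into $P_2$. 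Since renormalising a picture by a positive constant leaves unchanged which of its pieces are embeddable (it merely rescales $f'(K+)K-f(K)$), each of $f_1,\dots,f_{m-1}$ can be jumped at most once, so $S\le m-1\le N_A$ and the total number of steps is $2S+1\le 2N_A+1$.

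The only point here that is not pure bookkeeping --- and hence the main obstacle --- is the claim that no jump ever embeds an extension piece: that in \emph{every} picture, including the sub-pictures produced by splitting and their later re-extensions, the pieces added by~(\ref{exa}) are deferred to maturity. This is exactly Proposition~\ref{eeam}; the one thing still to verify is that its proof applies verbatim to the sub-pictures $(A_1,E_1)$ and $(A_2,E_2)$, which is legitimate because, by Propositions~\ref{eus} and~\ref{asl}, these again satisfy all the hypotheses of Theorem~\ref{nco} and Lemma~\ref{coe}.
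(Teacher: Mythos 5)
Your proof is correct and takes essentially the same approach as the paper's: bound the number of jumps by $N_A$ (each jump consumes one original piece of $A$, and by Proposition~\ref{eeam} the extension pieces are never jumped) and then count one terminal embedding per sub-picture, of which there are one more than the number of jumps. You simply make the binary-tree bookkeeping explicit where the paper's proof leaves it implicit in the phrase ``after $N_A$ steps we finished embedding $A$ and are left with at most $N_A+1$ subpictures.''
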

\begin{proof}
Using Proposition~\ref{eeam} we know that after $N_A$ steps we finished embedding $A$ and are left with at most $N_A+1$ subpictures. As the extension of the American in each of these subpictures coincides with the European, which can be embedded at once at maturity $T$ we obtain at most an additional $N_A+1$ steps. Hence the whole algorithm has to terminate after at most $2 N_A +1$ steps.
\end{proof}

We are now able to state the major theorem of this paper, which will show that the conditions given in Theorem~\ref{nco} are not only necessary for the absence of arbitrage, but indeed sufficient.

\begin{thm}
Suppose we are given any finite number of American and European Put option prices with fixed maturity $T$, such that the functions $A$ and $E$, obtained by interpolating linearly between the given prices, satisfy the conditions given in Theorem~\ref{nco} and Lemma~\ref{coe}. Using the algorithm in Section~\ref{alg} a model $(\mathbb{Q},(S_t)_{t \geq 0})$ can be constructed such that $(e^{-rt}S_t)_{t \geq 0}$ is a martingale with $e^{-rT}\mathbb{E}^{\mathbb{Q}}(K-S_T)_+ = E(K)$ and $\sup_{0 \leq \tau \leq T} \mathbb{E}^{\mathbb{Q}}[e^{-r\tau}(K-S_{\tau})_+] = A(K)$.
\end{thm}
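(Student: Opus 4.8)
The plan is to run the algorithm of Section~\ref{alg} and to read off both the model and the pricing verification from the tree of sub-pictures it produces. By Corollary~\ref{nsba} the algorithm terminates after at most $2N_A+1$ steps, so it builds a finite binary tree: every internal node carries a jump time $t^{*}$, two children levels $(S_d,S_u)$ with branching weights $(p_d,p_u)$, and functions $(A_i,E_i)$ which by Propositions~\ref{eus} and~\ref{asl} again satisfy the conditions of Theorem~\ref{nco} and Lemma~\ref{coe} and may be treated independently (no mass is exchanged across the critical strike); every leaf carries a terminal atomic law, and the non-negativity of all constructed weights and levels and the fact that $\mu_1,\mu_2$ are genuine probability measures are exactly what Proposition~\ref{eus} records. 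I would take $\Omega$ to be the set of root-to-leaf paths, $\mathbb{Q}$ the product of the branching weights together with the leaf law, $(\mathcal{F}_t)$ the natural filtration of the branching, and define $(S_t)$ to grow deterministically at rate $r$ on each interval between consecutive jump times, to jump to $S_d$ or $S_u$ at each node time, and at $T$ to jump to the terminal atom attached to the leaf that has been reached. That $(e^{-rt}S_t)$ is a $\mathbb{Q}$-martingale is immediate: it is constant on the growth intervals, at every jump $p_dS_d+p_uS_u$ equals the parent level times $e^{rt_c^{*}}$ by the choice of $S_u$ in Step~4 of the algorithm (equivalently~(\ref{lhs})), and the terminal laws are centred at the corresponding leaf level by Proposition~\ref{eus}.

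For the European prices I would argue by backward induction on the tree. At a leaf the attached law $\mu_i$ has mean equal to the leaf level times $e^{r(T-t^{*})}$, so $E_i(K)=e^{-r(T-t^{*})}\mathbb{E}^{\mu_i}(K-X)_{+}$ is precisely the conditional discounted Put payoff given that the leaf is reached; at an internal node Proposition~\ref{eus} gives $E=e^{-rt_c^{*}}(p_dE_1+p_uE_2)$. Since the factors $e^{-rt_c^{*}}$ collected along a path telescope to $e^{-rt^{*}_{\mathrm{leaf}}}$ and the residual $e^{-r(T-t^{*}_{\mathrm{leaf}})}$ completes the discount to $T$, the tower property yields $e^{-rT}\mathbb{E}^{\mathbb{Q}}(K-S_T)_{+}=E(K)$ for all $K\ge0$.

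The American identity is the crux, and I would establish it by a Snell-envelope argument built on the splitting identity~(\ref{ams}). First note that exercising at an interior time $t$ of a growth interval that starts at $t^{*}_{\mathrm{old}}$ with price $S_{\mathrm{node}}$ gives discounted payoff $(e^{-r(t-t^{*}_{\mathrm{old}})}K-S_{\mathrm{node}})_{+}\le(K-S_{\mathrm{node}})_{+}$, the value of exercising at the left endpoint, while exercising at $T^{-}$ is dominated by exercising at $T$ by Jensen; hence the candidate value process is $V_{t}(K)=\max\{(K-S_{t})_{+},\,e^{-r(t_{\mathrm{next}}-t)}(p_dA_{d}(K)+p_uA_{u}(K))\}$ on the growth interval whose endpoint is the jump time $t_{\mathrm{next}}$ into a node with children data $A_d,A_u$, and $V_{T}(K)=(K-S_{T})_{+}$. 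Using that $e^{-rt}S_t$ is constant on growth intervals one checks that $(e^{-rt}V_t(K))_{0\le t\le T}$ is a bounded $\mathbb{Q}$-supermartingale dominating $(e^{-rt}(K-S_t)_{+})$; the supermartingale inequality across a jump time is exactly $\max\{K-S_{\mathrm{node}},\,e^{-rt_c^{*}}(p_dA_1+p_uA_2)\}\ge e^{-rt_c^{*}}(p_dA_1+p_uA_2)$ underlying~(\ref{ams}), and at the root $V_0(K)=A(K)$, because~(\ref{ams}) at the root reads $\max\{K-S_0,\,e^{-rt_c^{*}}(p_dA_1+p_uA_2)\}$ and this equals $\max\{(K-S_0)_{+},\,\cdot\}$ since $A\ge0$. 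Optional stopping then gives $\mathbb{E}^{\mathbb{Q}}[e^{-r\tau}(K-S_\tau)_{+}]\le A(K)$ for every stopping time $\tau\le T$. For the reverse inequality I would take $\tau^{*}=\inf\{t:V_t(K)=(K-S_t)_{+}\}$, i.e.\ the first node, read from the root, at which immediate exercise beats continuation (and $T$ otherwise): $(e^{-rt}V_t(K))$ is a martingale up to $\tau^{*}$ and $V_{\tau^{*}}(K)=(K-S_{\tau^{*}})_{+}$, so $\mathbb{E}^{\mathbb{Q}}[e^{-r\tau^{*}}(K-S_{\tau^{*}})_{+}]=V_0(K)=A(K)$; here the leaf case uses $A_{\mathrm{leaf}}=E_{\mathrm{leaf}}\vee(K-S_{\mathrm{leaf}})_{+}$, the terminal form of the sub-pictures established after Proposition~\ref{eeam}. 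Combining the two inequalities, $\sup_{0\le\tau\le T}\mathbb{E}^{\mathbb{Q}}[e^{-r\tau}(K-S_\tau)_{+}]=A(K)$.

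The main obstacle is the third paragraph: assembling the value process $(V_t)$ and checking the supermartingale property across the jump times, and above all making sure the recursion reconstructs the \emph{given} function $A$ rather than its extension $\tilde{A}$ --- which is precisely the role of the maximum with $K-S_0$ in~(\ref{ams}) and of the $\max\{\cdot,\,e^{-rt_c^{*}}K-S_0\}$ term in~(\ref{a2t}).
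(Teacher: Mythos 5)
Your proposal is correct and takes essentially the same approach as the paper: the martingale and European identities are verified the same way, and the American identity rests on the same two ingredients---the recursion~(\ref{ams}) at internal nodes and the leaf form $A_{\mathrm{leaf}}=E_{\mathrm{leaf}}\vee(K-S_{\mathrm{leaf}})_{+}$ from Proposition~\ref{eeam}---together with the observation that exercise strictly inside a growth interval is dominated. The paper phrases the American verification as backward induction on node height via the Bellman equation (Dynamic Programming Principle), while you phrase it as a Snell-envelope/supermartingale verification with optional stopping; these are two standard presentations of the same argument and carry the same content.
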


\begin{proof}
By construction, the underlying $S$ is a martingale, since in each step where we are embedding a linear piece of $A$ we choose the upper node $S_u$ by the martingale property and the process $S$ grows between the jumps at the interest rate. Further we know from Proposition~\ref{eus} that $\mathbb{E}^{\mu_1}(e^{-r(T-t^{*})}X)=S_d$ and $\mathbb{E}^{\mu_2}(e^{-r(T-t^{*})}X)=S_u$, guaranteeing that the martingale property is preserved in the last embedding step in each subpicture.

To see that the European Put option prices on the underlying $S$ coincide with the given prices $E$ we recall from Proposition~\ref{eus} that the sum of the marginal distributions at maturity $T$ in the subpictures coincides with the distribution implied by $E$ at maturity $T$.

Finally we still need to show that the American put option prices on the underlying $S$ agree with the given prices $A$. To this end we first show that it cannot be optimal to exercise between jumps. For a fixed path of the underlying $S$ we have for $t_1 < t < t_2$, where $t_1$ and $t_2$ are jump-times for this path, that $e^{-rt_1}K > e^{-rt}K$, and as $e^{-rt_1}S_{t_1} = e^{-rt}S_t$ we obtain $e^{-rt_1}(K-S_{t_1})_+ \geq e^{-rt}(K-S_t)_+$. Hence optimal exercise can only occur at the actual jump times $t_j$. Let us denote by $n_i(t_j)$ the possible asset prices that $S$ can assume at time $t_j$, where $i \in \{1,...,\# \{ \mbox{values $S$ can take at time $t_j$} \} \}$. We can then proceed by defining the height $h$ of a node $n_i(t_j)$ at time $t_j$  for $j \in \{0,...,m\}$, $t_m=T$, and $i \in \{1,...,\# \{ \mbox{nodes at time $t_j$} \} \}$ by
\begin{align*}
h(n_i(t_j)) = \begin{cases}
0 &, \mbox{ if } t_j = T\\
1 + \max_{k}\{h(c_k(i,t_j))\}  &, \mbox{ otherwise}
\end{cases}
\end{align*}
where $c_k(i,t_j)$ denotes the $k$-th direct child of the node $n_i(t_j)$.

If we can show now that the value of the American put option in each node $n_i(t_j)$  and for each strike $K$, denoted by
\begin{align}\label{osp} 
v(K,t_j,n_i(t_j)) = \sup_{t_j \leq \tau \leq T} \mathbb{E}[e^{-r\tau}(K-S_{\tau})_{+} | S_{t_j} = n_i(t_j)]
\end{align}
coincides with the price given by $A(K,t_j,n_i(t_j))$, which is  obtained by following the transformation of $A$ by the algorithm in Section~\ref{alg} up to the subpicture, where we just jumped to the node $n_i(t_j)$, then we have shown that 
\begin{align}
\sup_{0 \leq \tau \leq T} \mathbb{E}[e^{-r\tau}(K-S_{\tau})_+ ] = A(K)
\end{align}
has to hold. By the Dynamic Programming Principle (Theorem 21.7 in \citet{BJO}), the optimal stopping problem in (\ref{osp}) can be rewritten as the Bellman equation
\begin{align}\label{beq} 
v(K,t_j,n_i(t_j)) = \max \{ e^{-rt_j}(K-n_i(t_j))_+, \sum_{k=1}^{N_{c(i,t_j)}} p_k v(K,t_{j_k},n_k(t_{j_k}))\},
\end{align}
where $N_{c(i,t_j)}$ is the number of direct children of the node $n_i(t_j)$ and $p_k$ the probability of being at time $t_{j_k}$ in $n_k(t_{j_k})$. Note that by the construction of the algorithm the number of direct children $N_{c(i,t_j)}$ is $2$ for any node with height $h(n_i(t_j)) > 1$. For nodes with height $h(n_i(t_j)) = 1$ we can have more than $2$ direct children, as all remaining linear pieces have to be embedded. Using (\ref{beq}) we can now prove that 
\begin{align}\label{pcn} 
v(K,t_j,n_i(t_j)) = A(K,t_j,n_i(t_j))
\end{align}
by induction on the height of the nodes $n_i(t_j)$. For a node of height $1$ we know from step 6, or 7 resp., of the algorithm in Section~\ref{alg} that $A(K,t_j,n_i(t_j)) = (K-n_i(t_j))_+ \vee E(K)$, where $E$ is the European with contract length $(T-t_j)$ and marginal distribution given by the direct children of the node $n_i(t_j)$ and their transition probabilities. Hence the value of $E$ agrees with the second expression on the right hand-side of (\ref{beq}) and therefore we have that for nodes of height $1$ the equation in (\ref{pcn}) is satisfied.

Suppose now that we know $v(K,t_j,n_i(t_j)) =  A(K,t_j,n_i(t_j))$ for all nodes up to a height of $n$. Then again we must have $v(K,t_j,n_i(t_j)) =  A(K,t_j,n_i(t_j))$ for nodes $n_i(t_j)$ of height $(n+1)$, as the definition of the given prices for nodes of height $(n+1)$ in (\ref{ams}) is the maximum over the immediate exercise at that node $(K-n_i(t_j))_+$ and $p_1  v(K,t_{j_1},n_1(t_{j_1})) + p_2  v(K,t_{j_2},n_2(t_{j_2}))$, coinciding with the continuation value in the Bellman-equation, as each node of height $h(n_i(t_j)) \geq 2$ has by construction exactly $2$ direct children $n_1(t_{j_1})$ and $n_2(t_{j_2})$. Hence we conclude by induction that the American put option prices on the underlying $S$ have to coincide with the given prices $A$.
\end{proof}

\section{Conclusion}
In this paper we presented no-arbitrage conditions on American put option prices in a model-independent setting, where our only financial assumptions were that we can buy and sell both types of derivatives initially at the given prices, and that we can trade in the underlying frictionlessly at a discrete number of times. 

Any violation of the conditions of Theorem~\ref{nco} implies the existence of a simple arbitrage strategy. More importantly, we also showed that there always exists a model under which the discounted expected payoffs coincide with the given American and European prices whenever all the conditions are satisfied.


We believe that the results of this paper can be applied in many different ways. Market makers and speculators alike could use the conditions of Theorem~\ref{nco} to find misspecifications in the market prices. Simple trading strategies, provided in the proof of Theorem~\ref{nco}, can then be used to generate arbitrage. Furthermore the necessary conditions present a way of verifying the plausibility of prices obtained by numerical procedures or to extrapolate non-quoted prices from existing market data. Additionally, the results presented in this paper can be used to get an estimate for the model-risk associated with a particular position in the set of American options. 

Lastly we think that the results of this paper lead to the following interesting and unanswered questions. Are the conditions of Theorem~\ref{nco} also sufficient in a generalised setting where the American and European prices are given as continuous (and convex) functions? What conclusions can be made about the range of prices for portfolios consisting of long and short position in American put options with different strikes. Is it possible to say something about the exercise behaviour of the long positions with respect to the exercise behaviour of the short positions? What are conditions for the absence of model-independent arbitrage in a market trading American and European Put options, where European option prices are known for different maturity dates? How do the conditions on the option prices change if the underlying is allowed to pay dividends?

\appendix
\section{Appendix}

\begin{lemma}\label{cae} 
Suppose the given functions $A$ and $E$ satisfy the necessary conditions $(i)$, $(iii)$ and $(iv)$ of Theorem~\ref{nco} and Lemma~\ref{coe}, then the following conditions are all equivalent:
\begin{enumerate}[(i)]
\item $\forall K  \geq 0: \forall \epsilon > 0: $
\begin{align}\label{eblf} 
\frac{A(K+\epsilon)-A(K)}{\epsilon}K-A(K) \geq \frac{E(K+\epsilon)-E(K)}{\epsilon}K-E(K).
\end{align}

\item There exists an $\tilde{\epsilon}=\tilde{\epsilon}(K)$ such that (\ref{eblf}) holds for all positive $\epsilon$ less than $\tilde{\epsilon}$.

\item $\forall K \geq 0: A'(K+)K-A(K) \geq E'(K+)K-E(K).$
\end{enumerate} 
\end{lemma}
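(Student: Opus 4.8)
The plan is to collapse all three conditions into a single monotonicity statement about the normalised early-exercise premium
\[
\psi(K):=\frac{A(K)-E(K)}{K},\qquad K>0 .
\]
First I would record the elementary rearrangement: writing $G:=A-E$, inequality (\ref{eblf}) for a fixed $K>0$ and $\epsilon>0$ reads $\tfrac{K}{\epsilon}\bigl(G(K+\epsilon)-G(K)\bigr)\ge G(K)$, and multiplying through by $\epsilon/K>0$ and dividing by $K+\epsilon>0$ this is exactly $\psi(K+\epsilon)\ge\psi(K)$. Consequently condition $(i)$ says precisely that $\psi$ is non-decreasing on $(0,\infty)$; condition $(ii)$ says that at each $K>0$ there is a right-neighbourhood on which $\psi$ stays $\ge\psi(K)$; and, since $A$ and $E$ are convex so that $A'(K+)$, $E'(K+)$ and hence $G'(K+)=A'(K+)-E'(K+)$ exist, the quotient rule gives $\psi'(K+)=K^{-2}\bigl[(A'(K+)K-A(K))-(E'(K+)K-E(K))\bigr]$, so condition $(iii)$ says exactly that $\psi'(K+)\ge 0$ for every $K>0$. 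The degenerate case $K=0$ is handled separately and is immediate: there (\ref{eblf}), its small-$\epsilon$ version, and $(iii)$ all collapse to the single inequality $A(0)\le E(0)$ (the factors $K$ and $A'(0+)K$, $E'(0+)K$ vanish), so the three conditions agree at $K=0$; from now on I work on $(0,\infty)$.

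With this dictionary, I would prove the cycle $(i)\Rightarrow(ii)\Rightarrow(iii)\Rightarrow(i)$. The implication $(i)\Rightarrow(ii)$ is trivial: any choice of $\tilde\epsilon(K)$ works. For $(ii)\Rightarrow(iii)$, fix $K>0$; by $(ii)$ we have $\psi(K+\epsilon)\ge\psi(K)$ for all sufficiently small $\epsilon>0$, hence $\bigl(\psi(K+\epsilon)-\psi(K)\bigr)/\epsilon\ge 0$, and letting $\epsilon\downarrow 0$ — the one-sided limit existing by the remark above — yields $\psi'(K+)\ge 0$, which is $(iii)$.

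The only substantive step is $(iii)\Rightarrow(i)$, and this is where the hypotheses of convexity are actually used: $A$ and $E$ are convex on $[0,\infty)$, hence locally Lipschitz on $(0,\infty)$, so $G=A-E$ is locally absolutely continuous with $G(K)=G(K_0)+\int_{K_0}^{K}G'(s+)\,ds$; multiplying by the smooth factor $1/K$, the function $\psi$ is locally absolutely continuous on $(0,\infty)$ with $\psi'(K)=K^{-2}\bigl(G'(K+)K-G(K)\bigr)$ at a.e.\ $K$. Condition $(iii)$ makes this expression non-negative for \emph{every} $K$, in particular a.e., so $\psi(b)-\psi(a)=\int_a^b\psi'(s)\,ds\ge 0$ for all $0<a<b$; equivalently, one may invoke the standard fact that a continuous function whose right-derivative is everywhere non-negative is non-decreasing. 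Either way $\psi$ is non-decreasing on $(0,\infty)$, which is $(i)$, closing the cycle. The main obstacle is really just spotting the substitution $\psi=(A-E)/K$; once the three conditions are recognised as ``$\psi$ non-decreasing'', ``$\psi$ locally non-decreasing to the right'', and ``$\psi$ has non-negative right-derivative'', the equivalences are routine, the only point demanding a little care being the passage from a pointwise sign condition on the right-derivative to global monotonicity, for which the absolute continuity coming from convexity is exactly what is needed.
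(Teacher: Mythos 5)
Your proof is correct and rests on the same core idea as the paper's: the central object in both is $\psi(K)=(A(K)-E(K))/K$, and the only substantive implication $(iii)\Rightarrow(i)$ is established by showing $\psi'(K+)\ge 0$ and then upgrading this pointwise sign to global monotonicity using the regularity that convexity supplies. The paper cites a monotonicity criterion (continuous function with everywhere non-negative right-derivative is non-decreasing), while you invoke local absolute continuity of $\psi$ on $(0,\infty)$ and the fundamental theorem of calculus; these are interchangeable and both are valid. The genuine value your write-up adds is the upfront algebraic observation that, after clearing denominators, inequality (\ref{eblf}) is \emph{literally} the statement $\psi(K+\epsilon)\ge\psi(K)$, so that all three conditions become the same assertion about $\psi$ at three scales (global monotonicity, local right-monotonicity, sign of the right-derivative). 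This makes the paper's closing computation $\epsilon\,\psi(K)\le\int_K^{K+\epsilon}\psi(u)\,du\le\int_K^{K+\epsilon}\bigl(A'(u+)-E'(u+)\bigr)\,du$ superfluous: once $\psi$ is known to be non-decreasing, $(i)$ drops out at once. Your explicit treatment of the $K=0$ edge case and of $(ii)\Rightarrow(iii)$ via the limit of one-sided difference quotients is also somewhat more careful than the paper's terse remark that the relevant sets of $\epsilon$ are nested, which is not quite a complete justification for the passage to a limit in $(ii)\Rightarrow(iii)$.
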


\begin{rmk} 
Any of the conditions in Lemma~\ref{cae} above implies that for traded strikes  $K_j^E \leq K_i^A \leq K_{j'}^E \leq K_{i'}^A$ the discretized version 
\begin{align*}\label{dsac} 
\frac{A(K_{i'}^A) - A(K_i^A)}{K_{i'}^A-K_i^A} K_i^A - A(K_i^A) \geq \frac{E(K_{j'}^E) - E(K_j^E)}{K_{j'}^E-K_j^E} K_{j}^E - E(K_j^E)
\end{align*}
has to hold. The market exhibits model-independent arbitrage whenever the condition 
is violated. This follows from the convexity of the function $A$.
\end{rmk} 

\begin{proof}[Proof of Lemma~\ref{cae}]
The implications $(i) \Rightarrow (ii)$ and $(ii) \Rightarrow (iii)$ are trivially fulfilled, since the set of $\epsilon$ for which we consider the inequality is in each case a subset of the set of $\epsilon$ from the statement above.

We then only have to show  $(iii) \Rightarrow (i)$ to prove equivalence between the 3 statements. 
Note further that it is enough to consider the case $K>0$, since for $K=0$ we have $A(K) = E(K) = 0$. If we suppose now that the condition $A'(K+)K-A(K) \geq E'(K+)K-E(K)$ holds for $K >0$ then we can show that $\frac{A(K)-E(K)}{K}$ has to be increasing on any compact interval $[a,b] \subset (0,\infty)$. To prove this we use Theorem 1 from \citet{MIV} implying that it is enough to show that $\frac{A(K)-E(K)}{K}$ is continuous on $[a,b]$ and that for all $K \in (a,b)$ the right sided derivative exists and is non negative. Since we know that $A$ and $E$ are convex functions on $(0,\infty)$ we know that their right sided derivatives exist and that $\frac{A(K)-E(K)}{K}$ is continuous on any subinterval $[a,b] \subset (0,\infty)$. Let us consider now the right side derivative of $\frac{A(K)-E(K)}{K}$ given by
\begin{align*}
\partial_+ \frac{A(K)-E(K)}{K} 	&= \lim_{\epsilon \downarrow 0} \frac{1}{\epsilon}\left(\frac{(A(K+\epsilon)-E(K+\epsilon))}{K+\epsilon} - \frac{(A(K)-E(K))}{K}\right)\\
								&= \frac{1}{K^2}(A'(K+)K-A(K))-(E'(K+)K-E(K)),
\end{align*}
which is non-negative as we have $A'(K+)K-A(K) \geq E'(K+)K-E(K)$. Hence $\frac{A(K)-E(K)}{K}$ is increasing and we can therefore write
\begin{align*}
\epsilon \frac{A(K)-E(K)}{K} &\leq \int_K^{K+\epsilon} \frac{A(u)-E(u)}{u} du\\
						&\leq \int_K^{K+\epsilon} (A'(u+)-E'(u+))du\\
						&=A(K+\epsilon)-E(K+\epsilon)-(A(K)-E(K)),
\end{align*}
where the integral in the second line is well defined as a convex function is differentiable almost everywhere. The inequality in the second line is obtained by the assumption $A'(K+)K-A(K) \geq E'(K+)K-E(K)$. We have therefore shown that $\frac{A(K+\epsilon)-A(K)}{\epsilon}K-A(K) \geq \frac{E(K+\epsilon)-E(K)}{\epsilon}K-E(K)$ has to hold for any $\epsilon >0$ and $K\geq 0$.
\end{proof}

\begin{lemma}\label{app1} 
Assume the piecewise linear functions $A$ and $E$ satisfy the necessary conditions $(i)$,$(iii)$ and $(iv)$ of Theorem~\ref{nco} and Lemma~\ref{coe}. Suppose further that their kinks are in $K_1^A,...,K_m^A$ and $K_1^E,...,K_n^E$ respectively. Then the condition $A'(K+)K-A(K) \geq E'(K+)K-E(K)$ holds for all strikes $K \geq 0$ if and only if it holds in the kinks of $E$.
\end{lemma}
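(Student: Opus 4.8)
The plan is to exploit the piecewise-constant structure of both sides of the inequality together with a monotonicity property. Write $\phi(K) = A'(K+)K - A(K)$ and $\psi(K) = E'(K+)K - E(K)$. The forward implication is immediate (kinks of $E$ are in particular strikes $K \geq 0$), so the real task is to deduce $\phi \geq \psi$ on all of $[0,\infty)$ from the hypothesis that $\phi(K_i^E) \geq \psi(K_i^E)$ for $i = 1, \dots, n$.

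First I would isolate two structural facts. Since $E$ is piecewise linear with kinks exactly at $K_1^E < \dots < K_n^E$, it is affine on each of the intervals $[0, K_1^E)$, $[K_i^E, K_{i+1}^E)$ for $1 \leq i \leq n-1$, and $[K_n^E, \infty)$; writing $E(K) = e_i(K-K_i^E)+E(K_i^E)$ on $[K_i^E,K_{i+1}^E)$ with constant slope $e_i = E'(K_i^E+)$, a direct computation gives $\psi(K) = e_iK_i^E - E(K_i^E) = \psi(K_i^E)$, so $\psi$ is constant on each of these intervals and equals its value at the left endpoint. Secondly, $\phi$ is non-decreasing: it is constant on every affine piece of $A$ (the increment of $A'(K+)K - A(K)$ there is zero), and at a kink $K_j^A$ of $A$ it jumps by $(A'(K_j^A+) - A'(K_j^A-))K_j^A \geq 0$, by convexity of $A$ and positivity of the strike. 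Finally, conditions $(iii)$ and $(iv)$ of Theorem~\ref{nco} together with $E(0)=0$ force $A(0) = 0$, whence $\phi(0) = -A(0) = 0 = \psi(0)$ and, using convexity of $A$ once more ($A(K) \leq A'(K+)K$), $\phi \geq 0$ everywhere.

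With these in hand the conclusion follows by a partition argument: given $K \geq 0$, locate the interval of the partition of $[0,\infty)$ by the kinks of $E$ that contains $K$. If $K$ lies in $[K_i^E, K_{i+1}^E)$ for some $1 \leq i \leq n-1$, or in $[K_n^E, \infty)$, then monotonicity of $\phi$, the hypothesis at the kink, and constancy of $\psi$ on that interval give $\phi(K) \geq \phi(K_i^E) \geq \psi(K_i^E) = \psi(K)$; if $K \in [0, K_1^E)$, then $\psi \equiv \psi(0) = 0 \leq \phi(K)$. This exhausts all $K \geq 0$. I do not expect a serious obstacle: the only points needing care are verifying that $\psi$ is genuinely constant (and not merely right-continuous) on each affine stretch of $E$ — including the unbounded right-hand piece and the initial piece through the origin — and the small amount of book-keeping at $K=0$, which is not a kink of $E$ but is handled automatically since $\phi(0) = \psi(0) = 0$.
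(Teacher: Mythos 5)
Your proof is correct and takes essentially the same approach as the paper's: both hinge on the two observations that $E'(K+)K - E(K)$ is constant on each affine stretch of $E$ (equal to its value at the left endpoint) and that $A'(K+)K - A(K)$ is non-decreasing, being constant on affine pieces of $A$ and jumping up by $(A'(K_j^A+)-A'(K_j^A-))K_j^A \geq 0$ at each kink by convexity. You isolate these as structural facts and then partition $[0,\infty)$ by the kinks of $E$, whereas the paper unfolds the same monotonicity step by step in an interval-by-interval induction; the substance is identical, and your handling of $A(0)=0$ via conditions $(iii)$ and $(iv)$ fills in a small point the paper leaves implicit.
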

\begin{proof}
We only have to show that it is enough to have the condition fulfilled in all strikes $K_i^E$, $i=1,...,m$, since the other implication is trivially fulfilled.
Suppose now the condition is fulfilled in $K_i^E$ and choose a strike $K \in [K_i^E,K^A]$, where $K^A = \min_{j=1,...,m}\{K_j^A:  K_i^E < K_j^A < K_{i+1}\}$. This way we have
\begin{align*}
A'(K+)K-A(K) &\geq A'(K_i^E+)K-A(K)\\
	&= A'(K_i^E+)K - A'(K_i^E+)(K-K_i^E)-A(K_i^E)\\
	&= A'(K_i^E+)K_i^E - A(K_i^E)\\
	&\geq E'(K_i^E+)K_i^E - E(K_i^E),
\end{align*}
where the first inequality holds since $A'(K^A+) \geq A'(K^A-)$. To obtain the equality in the second line we simply use the fact that for any $K$ in that interval we can write $A(K)=A'(K_i^E+)(K-K_i^E)+A(K_i^E)$ and for the last inequality that  the condition is known to hold in $K_i^E$. But then again we can rewrite
\begin{align*}
E'(K_i^E+)K_i^E - E(K_i^E) &= E'(K_i^E+)K - E'(K_i^E+)(K-K_i^E)-E(K_i^E)\\
		&=E'(K+)K - E(K).
\end{align*}
Hence we have $A'(K+)K-A(K) \geq E'(K+)K - E(K)$. This leaves us to show that for any strike $K \in (K^A,K_{i+1}^E)$ the condition is fulfilled, but we can use the same argument now, inductively on the kinks of $A$ between $K^A$ and $K_{i+1}^E$, we have that the condition has to hold for any strike $K \in [K_i^E, K_{i+1}^E)$. Since the strike $K_i^E$ was taken arbitrarily we know that the condition has to hold for all strikes $K \in [K_1^E,\infty)$. Then again for any strike prior to $K_1^E$ the condition is trivially fulfilled, since we know that $A$ is increasing and convex and therefore has to satisfy $A'(K+)K-A(K) \geq 0$.
\end{proof}

\begin{lemma}\label{bacs} 
Assume the functions $A$ and $E$ given by (\ref{gae}) satisfy the necessary conditions of Theorem~\ref{nco} and Lemma~\ref{coe}, where $A$ is extended to $\tilde{A}$ as in (\ref{exa}) and the European Put price function $E$ with contract length $(T-t_{old}^{*})$ is given by the marginal distribution $\mu= p_1 \delta_{K^E_1} + ... + p_n \delta_{K^E_n}$ with mean $\mathbb{E}^{\mu}(X)=e^{r(T-t_{old}^{*})}S_0$ at maturity $T$. Suppose further that the time of the next jump $t^{*}$ and the associated critical strike $K^{*}$ were determined as in Section~\ref{ecct} at which point the linear piece $s_k^A(K-S_d^k)$ of $\tilde{A}$, denoted by $f_k$, is embedded.

Suppose further that $E_2$ is given by Proposition~\ref{eus} and define $\hat{A}(K) = e^{rt_c^{*}}p_u^{-1} \left(\max \{f_m,...,f_{N_{\tilde{A}}} \}- f_k\right)$ then it follows that $$\hat{A}'(K+)K-\hat{A}(K) = E'_2(K+)K-E_2(K)$$ for $K \geq K_p^E$, where $K_p^E$ and $N_{\tilde{A}}$ are defined in  Section~\ref{alg}.
\end{lemma}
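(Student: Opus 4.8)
The plan is a direct computation in which both sides of the asserted identity are re-expressed through $\tilde A$, $E$ and the already-split function $E_1$, so that the equality collapses to the defining property of the extension~(\ref{exa}). Throughout we work on the half-line $\{K\ge K_p^E\}$, and we may assume that $f_k$ is one of the original linear pieces $f_1,\dots,f_{m-1}$ of $A$, which is the only case relevant for the splitting step.

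First I would record three facts valid for $K\ge K_p^E$. \emph{(a)} The added pieces dominate there: $\max\{f_m,\dots,f_{N_{\tilde A}}\}(K)=\tilde A(K)$. Indeed the first added piece $f_m$ meets $\tilde A$ at $K_p^E$ by the definition of the extension, and for $K>K_p^E$ the convexity of $\tilde A$ together with the ordering of the $f_j$ by slope rules out any of $f_1,\dots,f_{m-1}$ being the maximum; hence $\hat A(K)=e^{rt_c^*}p_u^{-1}\bigl(\tilde A(K)-f_k(K)\bigr)$ on this range. \emph{(b)} By the construction of~(\ref{exa}), the slope $\tilde A'(K_i^E+)$ is chosen in~(\ref{saex}) precisely so that $\tilde A'(K_i^E+)K_i^E-\tilde A(K_i^E)=E'(K_i^E+)K_i^E-E(K_i^E)$ at every atom $K_i^E$ with $i\ge p$; since $\tilde A$ and $E$ are affine between consecutive atoms, this upgrades to $\tilde A'(K+)K-\tilde A(K)=E'(K+)K-E(K)$ for all $K\ge K_p^E$. \emph{(c)} The support of $\mu_1$, transported to the strike axis, lies at or to the left of $K_p^E$: by Proposition~\ref{eus} the last atom of $\mu_1$ is $K^*e^{r(T-t^*)}$, which by Remark~\ref{roct} equals the first European strike $K_{i^*}^E$ at which the Legendre--Fenchel inequality between the embedded piece $f_k$ and $E$ fails; since $f_k$ sits convexly at or below the last original piece of $A$, its intercept $s_k^AS_d^k=-f_k(0)$ is no larger than the intercept of the piece of $\tilde A$ on $[K_m^A,K_p^E]$, whence $K_{i^*}^E\le K_p^E$. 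Consequently $E_1$ is affine on $\{K\ge K_p^E\}$, and from Proposition~\ref{eus} ($\mathbb{E}^{\mu_1}(e^{-r(T-t^*)}X_1)=S_d=S_d^k$) we get $E_1'(K+)K-E_1(K)=S_d^k$ there.

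With these in hand the calculation is routine. From $f_k(K)=s_k^A(K-S_d^k)$ we have $f_k'(K)K-f_k(K)=s_k^AS_d^k$, so (a) and (b) give
\[
\hat A'(K+)K-\hat A(K)=e^{rt_c^*}p_u^{-1}\bigl(E'(K+)K-E(K)-s_k^AS_d^k\bigr),\qquad K\ge K_p^E .
\]
On the other hand $E=e^{-rt_c^*}(p_dE_1+p_uE_2)$ by Proposition~\ref{eus}, so $E_2=p_u^{-1}\bigl(e^{rt_c^*}E-p_dE_1\bigr)$, and using (c) together with $p_d=e^{rt_c^*}s_k^A$,
\[
E_2'(K+)K-E_2(K)=p_u^{-1}\bigl(e^{rt_c^*}(E'(K+)K-E(K))-p_dS_d^k\bigr)=e^{rt_c^*}p_u^{-1}\bigl(E'(K+)K-E(K)-s_k^AS_d^k\bigr),
\]
which is exactly $\hat A'(K+)K-\hat A(K)$ for $K\ge K_p^E$, as claimed.

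The main obstacle is fact (c): one must be sure that $E_1$ is genuinely affine on the \emph{whole} half-line $\{K\ge K_p^E\}$, which forces one to locate the last atom of $\mu_1$ relative to $K_p^E$, and hence to combine the explicit description of the critical strike via the first failure of the Legendre--Fenchel condition (Remark~\ref{roct}) with the convex ordering of the linear pieces of $A$. Facts (a) and (b) are routine bookkeeping with piecewise-linear functions, and once all three are available the matching of the two Legendre--Fenchel expressions is immediate.
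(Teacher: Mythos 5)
Your proof is correct and follows essentially the same route as the paper's: both rest on the Legendre--Fenchel equality built into the extension~(\ref{exa}), the split $E=e^{-rt_c^*}(p_dE_1+p_uE_2)$ from Proposition~\ref{eus}, the identity $p_d=e^{rt_c^*}s_k^A$, and the explicit linear forms of $E_1$ and $f_k$ to the right of the critical strike. The only cosmetic difference is that the paper reduces to checking the identity at the atoms of $E_2$ via Lemma~\ref{app1}, whereas you verify it directly pointwise on $\{K\ge K_p^E\}$ by showing both sides equal $e^{rt_c^*}p_u^{-1}\bigl(E'(K+)K-E(K)-s_k^AS_d^k\bigr)$.
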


\begin{rmk}
The result of Lemma~\ref{bacs} shows that we obtain the extension in the right hand-side sub-picture $P_2$ by transforming the extension in the original picture $P$, as the functions $\hat{A}$ and $A_2$ coincide in $K_p^E$ and the Legendre-Fenchel condition is satisfied with equality for $K \geq K_p^E$.
\end{rmk}

\begin{proof}[Proof of Lemma~\ref{bacs}]
We know already from Lemma~\ref{app1} that it is enough to check the condition in the atoms of $E_2$, which by the definition of $E_2$ in Proposition~\ref{eus} coincide with the ones of $E$. Consider therefore $K_j^E$, where $j \in \{p,...,n\}$ and assume, without loss of generality, that the American $A$ to the right of $K_j^E$ is given by $f_i$ then we have
\begin{multline*}
\hat{A}'(K_j^E+)K_j^E-\hat{A}(K_j^E) = \\
e^{rt_c^{*}}p_u^{-1} [ (f'_j(K_j^E+)K_j^E - f_j(K_j^E)) - (f'_k(K_j^E+)K_j^E - f_k(K_j^E))].
\end{multline*}
Furthermore we can use Proposition~\ref{eus} to write
\begin{align*}
E_2(K_j^E) = p_u^{-1} (e^{rt_c^{*}}E(K_j^E)-p_d E_1(K_j^E)),
\end{align*}
since Proposition~\ref{cta} guarantees that we have $K_j^E \geq K^{*}e^{r(T-t^{*})}$. As we are only considering strikes where $A'(K+)K-A(K) = E'(K+)K-E(K)$ holds, we get that the equation $\hat{A}'(K_j^E+)K_j^E-\hat{A}(K_j^E) = E'_2(K_j^E+)K_j^E-E_2(K_j^E)$ reduces to $e^{rt_c^{*}}(f'_k(K_j^E+)K_j^E-f_k(K_j^E)) = p_d (E_1'(K_j^E+)K_j^E-E_1(K_j^E))$. This equality has to hold though, since we know that $f_k(K) = e^{-rt_c^{*}}p_d(K-S_d)$ and $E_1(K)= e^{-r(T-t^{*})}K-S_d$ for $K \geq K^{*}e^{r(T-t^{*})}$.
\end{proof}


\newpage
\bibliographystyle{plainnat}
\bibliography{mybibliography}

\end{document}